\declaretheorem[name=Lemma]{lemma}
\definecolor{Darkblue}{rgb}{0,0,0.4}
\definecolor{Brown}{cmyk}{0,0.61,1.,0.60}
\definecolor{Purple}{cmyk}{0.45,0.86,0,0}
\newtheorem{theorem}{Theorem}
\newtheorem{corollary}{Corollary}
\newtheorem{claim}{Claim}
\newtheorem{definition}{Definition}
\newtheorem{observation}[lemma]{Observation}
\newcommand{\namedref}[2]{\hyperref[#2]{#1~\ref*{#2}}}
\newcommand{\tableref}[1]{\namedref{Table}{#1}}
\newcommand{\propertyref}[1]{\namedref{Property}{#1}}
\newcommand{\obsref}[1]{\namedref{Observation}{#1}}
\newcommand{\conref}[1]{\hyperref[#1]{Condition~(\ref*{#1})}}
\newcommand{\E}{{\mathbb{E}}}
\newcommand{\R}{\mathbb{R}}
\newcommand{\Z}{\mathbb{Z}}
\newcommand{\poly}{{\rm poly}}
\newcommand{\fpath}{{\rm path}}
\newcommand{\froot}{{\rm root}}
\newcommand{\odd}{{\rm odd}}
\newcommand{\even}{{\rm even}}
\newcommand{\cdim}{\textsf{cdim}}
\newcommand{\SPD}{\textsf{SPD}\xspace}
\newcommand{\SPDs}{{\SPD}s\xspace}
\newcommand{\SPDdepth}{\textsf{SPDdepth}\xspace}
\def\inline#1:{\par\vskip 7pt\noindent{\bf #1:}\hskip 10pt}
\def\inline#1:{\par\vskip 7pt\noindent{\bf #1:}\hskip 10pt}
\def\blackslug{\hbox{\hskip 1pt \vrule width 4pt height 8pt
		depth 1.5pt \hskip 1pt}}
\def\QED{\quad\blackslug\lower 8.5pt\null\par}
\newcommand{\initOneLiners}{%
    \setlength{\itemsep}{0pt}
    \setlength{\parsep }{0pt}
    \setlength{\topsep }{0pt}
}
\newenvironment{OneLiners}[1][\ensuremath{\bullet}]
    {\begin{list}
        {#1}
        {\initOneLiners}}
    {\end{list}}
\newcommand{\alert}[1]{\textbf{\color{red}
		[[[#1]]]}\marginpar{\textbf{\color{red}**}}\typeout{ALERT:
		\the\inputlineno: #1}}
\definecolor{purple}{rgb}{0.294, 0, 0.71}
\providecommand{\algorithmname}{Algorithm}
\begin{document}
\author[1]{Ittai Abraham}
\author[2]{Arnold Filtser}
\author[3]{Anupam Gupta}
\author[4]{Ofer Neiman}
\affil[1]{VMWare. Email: \texttt{iabraham@vmware.com}}
\affil[2]{Bar-Ilan University. Email: \texttt{arnold.filtser@biu.ac.il} }
\affil[3]{Carnegie Mellon University. Email: \texttt{anupamg@cs.cmu.edu}}
\affil[4]{Ben-Gurion University of the Negev. Email: \texttt{neimano@cs.bgu.ac.il}}

\begin{titlepage}
  \title{Metric Embedding via
    Shortest Path Decompositions\thanks{Preliminary version of this paper was published in the proceedings of STOC’18. A full version of this paper is also available at \href{https://arxiv.org/abs/1708.04073}{arxiv:1708.04073}. }}
  \maketitle

\begin{abstract}
We study the problem of embedding shortest-path metrics of weighted graphs into $\ell_p$ spaces.  We introduce a new embedding technique based on low-depth decompositions of a graph via shortest paths. The notion of Shortest Path Decomposition depth is inductively defined: A (weighed) path graph has shortest path decomposition (SPD) depth $1$. General graph has an SPD of depth $k$ if it contains a shortest path whose deletion leads to a graph, each of whose components has SPD depth at most $k-1$.  
In this paper we give an $O(k^{\min\{\nicefrac{1}{p},\nicefrac{1}{2}\}})$-distortion embedding for graphs of SPD depth at most $k$. This result is asymptotically tight for any fixed $p>1$, while for $p=1$ it is tight up to second order terms.

As a corollary of this result, we show that graphs having pathwidth $k$ embed into $\ell_p$ with distortion $O(k^{\min\{\nicefrac{1}{p},\nicefrac{1}{2}\}})$.  For $p=1$, this improves over the best previous bound of Lee and Sidiropoulos that was exponential in $k$; moreover, for other values of $p$ it gives the first embeddings whose distortion is independent of the graph size $n$. 
Furthermore, we use the fact that planar graphs have SPD depth $O(\log n)$ to give a new proof that any planar graph embeds into $\ell_1$ with distortion $O(\sqrt{\log n})$. Our approach also gives new results for graphs with bounded treewidth, and for graphs excluding a fixed minor.

\end{abstract}
\thispagestyle{empty}	
\end{titlepage}
\section{Introduction}
\label{sec:intro}

Low-distortion metric embeddings are a crucial component in the modern
algorithmist toolkit. Indeed, they have applications in approximation
algorithms \cite{LLR95}, online algorithms \cite{BBMN15}, distributed algorithms \cite{KKMPT12}, and for solving
linear systems and computing graph sparsifiers \cite{ST04}. Given a (finite) metric space $(V,d)$, a map $\phi: V \to
\R^D$, and a norm $\|\cdot\|$, the \emph{contraction} and
\emph{expansion} of the map $\phi$ are the smallest $\tau, \rho \geq 1$,
respectively, such that for every pair $x,y\in V$,
\[ \frac1\tau \leq \frac{\| \phi(x) - \phi(y) \|}{d(x,y)} \leq
\rho~~. \] The \emph{distortion} of the map is then $\tau \cdot \rho$.

In
this paper we will investigate embeddings into $\ell_p$ norms; the most
prominent of which are the Euclidean norm $\ell_2$ and the cut norm
$\ell_1$; the former for obvious reasons, and the latter because of its
close connection to graph partitioning problems, and in particular the
Sparsest Cut problem. Specifically, the ratio between the Sparsest Cut and the multicommodity flow equals the distortion of the optimal embedding into $\ell_1$ (note that every $\ell_1$ metric is a linear combination of cut metrics. See \cite{LLR95,GNRS04} for more details on the connection to multicommodity flows).

We focus on embedding of metrics arising from certain graph families.
Indeed, since general $n$-point metrics require $\Omega(\nicefrac{\log
	n}{p})$-distortion to embed into $\ell_p$-norms, much attention was
given to embeddings of restricted graph families that arise in
practice. (Embedding an (edge-weighted) graph is short-hand for
embedding the shortest path metric of the graph generated by these
edge-weights.) Since the class of graphs embeddable with some distortion
into some target normed space is closed under taking minors, it is
natural to focus on minor-closed graph families. A long-standing
open problem in this area to decide whether all non-trivial minor-closed families of
graphs embed into $\ell_1$ with distortion depending only on the graph
family, and not the size $n$ of the graph.

While this question remains unresolved in general, there has been some
progress on special classes of graphs.
The class of outerplanar
graphs (which exclude $K_{2,3}$ and $K_4$ as a minor) embeds
isometrically into $\ell_1$; this follows from results of Okamura and
Seymour \cite{OS81}. Following \cite{GNRS04}, Chakrabarti et al.~\cite{CJLV08} show
that every graph with treewidth-$2$ (which excludes $K_4$ as a minor)
embeds into $\ell_1$ with distortion $2$ (which is tight, as shown by \cite{LeeR10}).
Lee and Sidiropoulos \cite{LS13} showed that every graph with pathwidth
$k$ can be embedded into $\ell_1$ with distortion $(4k)^{k^3+1}$.
See \Cref{sec:related} for additional
results.

We note that $\ell_2$ is a potentially more natural and useful target space than $\ell_1$ (in particular, finite subsets of $\ell_2$ embed  isometrically into $\ell_1$). Alas, there are only few (natural) families of metrics that admit constant distortion embedding into Euclidean space, such as ``snowflakes'' of
doubling metrics \cite{A83},
doubling trees \cite{GKL03} and graphs
of bounded bandwidth \cite{BCMN13}. All these families have bounded
doubling dimension. (For definitions, see	 \Cref{sec:preliminaries}.)

\subsection{Our Results}

In this paper we develop a new technique for embedding graphs
into $\ell_p$ spaces with small distortion. 
We introduce the notion of Shortest Path Decomposition (SPD) of bounded depth.
Every (weighted) path graph has an \SPD of depth $1$. A graph $G$ has an \SPD of depth
$k$ if there exists a \emph{shortest path} $P$, such that deleting $P$ from the graph $G$ (that is, deleting all the vertices on $P$ and the adjacent edges)
results in a graph whose connected components all have \SPD of depth at most $k-1$. (An alternative
definition appears in \Cref{def:SPD}.) Our main result is
the following.

\begin{theorem}[Embeddings for \SPD Families]
	\label{thm:main}
	Let $G=\left(V,E\right)$ be a weighted graph with
	an \SPD of depth $k$. Then there exists an embedding
	$f:V\to\ell_{p}$ with distortion $O(k^{\nicefrac{1}{p}})$.
\end{theorem}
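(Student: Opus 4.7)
I would proceed by induction on the SPD depth $k$. The base case $k=1$ is a weighted path, which embeds isometrically into $\R$ via its position coordinate, giving distortion $1 = 1^{\nicefrac{1}{p}}$.

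For the inductive step, let $G$ have SPD depth $k$ with top-level shortest path $P$ whose removal yields components $C_1, \ldots, C_m$ of depth at most $k-1$. By induction each $C_i$ admits an embedding $f_i$ of distortion $O((k-1)^{\nicefrac{1}{p}})$, rescaled to be non-expansive. I would construct $f\colon V \to \ell_p$ as a direct sum of two coordinate blocks. Block~A consists of the inductive embeddings: each $f_i$ occupies its own coordinate sub-block and is extended by zero on vertices outside $V(C_i)$, after first shifting $f_i$ so that an ``anchor'' vertex of $C_i$ closest to $P$ maps to the origin. Block~B adds $P$-derived coordinates: for each component, a ``tagged depth'' coordinate $\sigma_i(v) = d_G(v,P) \cdot \mathbf{1}_{v \in C_i}$, plus position coordinates $\tau_j(v) = d_G(v, x_j)$ for endpoints $x_j$ of $P$. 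Each entry of Block~B is $1$-Lipschitz with respect to $d_G$, and since $P$ is a \emph{shortest} path, $\tau_j$ restricted to $P$ is an isometric parametrization.

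The expansion of $f$ is $O(k^{\nicefrac{1}{p}})$: Block~B contributes $O(d_G(u,v)^p)$ to $\|f(u)-f(v)\|_p^p$ by $1$-Lipschitzness of each entry, and Block~A contributes at most $(k-1) \cdot d_G(u,v)^p$ by the inductive non-expansion. For contraction, I would split into cases. If $u, v$ lie in different components of $G \setminus P$, every $u$-to-$v$ path must cross $P$, so $d_G(u,v) \geq d_G(u,P) + d_G(v,P)$; the $\sigma_i, \sigma_j$ entries contribute $d_G(u,P)^p + d_G(v,P)^p$, and the two endpoint-position coordinates $\tau_0, \tau_L$ together capture $d_P(\pi(u), \pi(v))$ (since each is a sum/difference of the $r$'s and the position-difference, so at least one has magnitude $\geq d_P(\pi(u),\pi(v))$), jointly recovering $d_G(u,v)$ up to a constant factor. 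If $u, v$ lie in the same $C_i$ and their shortest $G$-path stays within $C_i$, Block~A provides the lower bound $\Omega(d_G(u,v)/(k-1)^{\nicefrac{1}{p}})$ directly from the inductive hypothesis.

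The main technical obstacle is the remaining subcase: $u, v \in C_i$ whose shortest $G$-path uses $P$ as a shortcut, so that $d_G(u,v) < d_{C_i}(u,v)$. Here Block~A may over-expand relative to $d_G$ (since $f_i$ is non-expansive only with respect to $d_{C_i}$), and the inductive lower bound is in terms of $d_{C_i}$, not $d_G$. Resolving this requires either strengthening the inductive hypothesis to control the embedding with respect to $d_G|_{V(C_i)}$ (exploiting that the anchor shift places the closest-to-$P$ vertex of each $C_i$ at a common origin, so shortcut distances telescope into Block~B), or showing directly that Block~B alone lower-bounds $d_G(u,v)$ in this regime via the shortest-path structure through $P$, in which case any over-expansion from Block~A is cancelled by the additional $P$-shortcut information captured by the $\sigma$ and $\tau$ coordinates.
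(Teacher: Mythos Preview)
The obstacle you flag in your final paragraph is real, but it is the \emph{central} difficulty of the theorem, and neither of your suggested resolutions can work. In fact the gap is broader than you acknowledge: your claimed expansion bound for Block~A already fails whenever $u$ and $v$ do not lie in the same component with $d_{C_i}(u,v)=d_G(u,v)$, not only in the ``shortcut through $P$'' subcase. For a concrete failure, take $G$ a $2n$-cycle with $P$ one half and $C_1$ the other half; let $a_1$ be one endpoint of $C_1$ and $v$ the other. Then $\|f_1(v)\|_p=\|f_1(v)-f_1(a_1)\|_p$ is of order $d_{C_1}(v,a_1)=n-1$ by the inductive contraction bound, while the vertex $u\in P$ adjacent to $v$ has $f_1(u)=0$ and $d_G(u,v)=1$. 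Your embedding has expansion $\Omega(n)$ on a graph of \SPD depth~$2$. The anchor shift does nothing here: it does not change $\|f_i(u)-f_i(v)\|$ for $u,v\in C_i$, and for $u\notin C_i$ it only gives $\|f_i(v)\|\le d_{C_i}(v,a_i)$, which, as the example shows, is not controlled by $d_G(u,v)$. And over-expansion can never be ``cancelled'' by adding coordinates: in a direct sum $\|f(u)-f(v)\|_p^p$ is a sum of nonnegative contributions, so Block~B can only make things worse for the upper bound. The same disease infects the recursive $\tau$-coordinates once the induction is unrolled: the level-$(k{-}1)$ root distances are $d_{C_i}(\cdot,r)$, which are $1$-Lipschitz only in $d_{C_i}$, not in $d_G$.

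The paper's fix is to abandon the black-box induction and instead \emph{truncate} every coordinate associated with a cluster $X$ at $O\bigl(d_G(v,V\setminus X)\bigr)$, the distance from $v$ to the boundary of $X$ in the \emph{original} metric. This forces every coordinate to be $O(1)$-Lipschitz in $d_G$ regardless of how badly $d_X$ distorts $d_G$, and since each vertex is nonzero in $O(k)$ coordinates the expansion is $O(k^{1/p})$. Truncation, however, destroys contraction for the root-distance coordinate: the root $r$ of $P_X$ may sit at distance far exceeding $d_G(v,V\setminus X)$ from both $u$ and $v$, so after truncation both values collapse to the same cap even when $|d_X(u,r)-d_X(v,r)|$ is large. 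The missing idea is to replace $d_X(v,r)$ by a randomized periodic \emph{sawtooth} $g_t\bigl(\beta\, d_X(v,r)+\alpha\cdot 2^{t+1}\bigr)$ at scale $2^t\approx d_G(v,V\setminus X)$, with $\alpha,\beta$ uniform and independent; one then proves that in expectation this separates $u$ and $v$ by $\Omega\bigl(\min\{|d_X(u,r)-d_X(v,r)|,\,2^t\}\bigr)$, which is exactly what the contraction argument needs. Neither the truncation nor the sawtooth appears in your proposal, and without them the approach cannot be completed.
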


{\em Remark:} Since finite subsets of $\ell_2$ embed isometrically into $\ell_p$ for any $1\le p\le\infty$, we get that the distortion of \Cref{thm:main} is never larger than $O(\sqrt{k})$.

\paragraph{Graph families with \SPD of small depth.}

We will show that graphs of pathwidth $k$ have \SPD of depth $k+1$, and thus obtain the following result as a simple corollary of \Cref{thm:main}.

\begin{theorem}[Pathwidth Theorem]
	\label{thm:path-main}
	Any graph with pathwidth $k$ embeds into $\ell_p$ with distortion
	$O(k^{\min\{\nicefrac{1}{p},\nicefrac{1}{2}\}})$.
\end{theorem}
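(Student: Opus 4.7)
The plan is to reduce \Cref{thm:path-main} to \Cref{thm:main} by establishing the purely structural claim that every graph of pathwidth $k$ admits an \SPD of depth at most $k+1$. Once this is in hand, \Cref{thm:main} immediately yields an embedding into $\ell_p$ of distortion $O((k+1)^{\nicefrac{1}{p}}) = O(k^{\nicefrac{1}{p}})$, while the remark following \Cref{thm:main}, which notes that finite subsets of $\ell_2$ embed isometrically into $\ell_p$ for every $p\ge 1$, caps the distortion by $O(\sqrt{k})$. Taking the minimum of the two bounds yields the stated $O(k^{\min\{\nicefrac{1}{p},\nicefrac{1}{2}\}})$.

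For the structural claim I would argue by induction on $k$. The base case $k=0$ is trivial: the graph consists of isolated vertices, each a trivial path of \SPD depth $1$. For the inductive step, fix a path decomposition $(X_1,\dots,X_m)$ of width $k$, pick any $u\in X_1$ and $v\in X_m$, and let $P$ be a shortest $u$--$v$ path in $G$. The key observation is that every bag $X_i$ intersects $P$. This uses the standard separator property of path decompositions: since the bags containing any fixed vertex form a contiguous interval, if $u,v\notin X_i$ then deleting $X_i$ disconnects $u$ from $v$, so any $u$--$v$ path must meet $X_i$; and if $u\in X_i$ or $v\in X_i$, then $P$ trivially meets $X_i$ at an endpoint. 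Deleting $P$ therefore shrinks every bag to size at most $k$, so the sequence $(X_1\setminus P,\dots,X_m\setminus P)$ is a valid path decomposition of $G\setminus P$ of width at most $k-1$. Each connected component of $G\setminus P$ inherits such a decomposition and, by the inductive hypothesis, admits an \SPD of depth at most $k$. Prepending the shortest path $P$ exhibits an \SPD of depth at most $k+1$ for $G$.

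The only step carrying any real content beyond bookkeeping is the separator observation that $P$ meets every bag; this is exactly where the choice of endpoints $u\in X_1$, $v\in X_m$ in the extreme bags of the decomposition is used. Once this is established, the induction goes through and the distortion guarantee is obtained by a direct application of \Cref{thm:main} to the resulting \SPD, combined with the isometric embedding $\ell_2\hookrightarrow\ell_p$ from the remark for the range $1\le p<2$.
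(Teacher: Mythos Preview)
Your proposal is correct and follows essentially the same approach as the paper: show by induction that a pathwidth-$k$ graph has \SPD depth at most $k+1$ by taking a shortest path between vertices in the two extreme bags (which hits every bag and thus drops the width by one), then apply \Cref{thm:main} together with the remark that $\ell_2$ embeds isometrically into $\ell_p$ to obtain the $\min\{\nicefrac{1}{p},\nicefrac{1}{2}\}$ exponent. Your justification that $P$ meets every bag via the separator property is a touch more explicit than the paper's, but the argument is the same.
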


Note that this is a super-exponential
improvement over the best previous
distortion bound of $O(k)^{k^3}$, by Lee and Sidiropoulos \cite{LS13}.  Their
approach was based on probabilistic embedding into trees, which implies
embedding only into $\ell_1$. Such an approach cannot yield distortion
better than $O(k)$, due to known lower bounds for the diamond graph
\cite{GNRS04}, that has pathwidth $k+1$.
Our embedding holds for any $\ell_p$ space, and we can
overcome the barrier of $\Theta(k)$. In particular, we obtain embeddings of pathwidth-$k$ graphs into both $\ell_2$ and $\ell_1$
with distortion $O(\sqrt{k})$. Moreover, an embedding with this
distortion can be found efficiently via semidefinite-programming; see,
e.g.,~\cite{LLR95}, even without access to the actual path decomposition (which is NP-hard even to approximate \cite{BGHK92}).
We remark that graphs of bounded pathwidth can have arbitrarily large doubling dimension (exhibited by star graphs that have pathwidth~1), and thus our result is a noteworthy example of a non-trivial Euclidean embedding with constant distortion for a family of metrics with unbounded doubling dimension.

Since graphs of treewidth $k$ have pathwidth $O(k\log n)$ (see, e.g., \cite{KS93}), \Cref{thm:path-main} provides an embedding of such graphs into $\ell_p$ with distortion $O((k\log n)^{1/p})$.
This strictly improves the best previously known bound, which follows from a theorem in \cite{KLMN04}
(who obtained distortion $O(k^{1-1/p}\log^{1/p}n)$ ), for any $p>2$, and matches it for $1\le p\le 2$. While \cite{KK16} obtained recently a distortion bound with improved dependence on $k$, their result $O((\log (k\log n))^{1-1/p}(\log^{1/p} n))$ has sub-optimal dependence on $n$.

Moreover, we derive several other results for planar graphs, and more generally graphs excluding a fixed minor. Even though these families have unbounded pathwidth, we show that they have \SPD of depth $O(\log n)$. These results
are summarized in \tableref{tab:resultSummery}, they either
improve on the state-of-the-art, or provide matching bounds using a
new approach.

In \Cref{sec:dimension} we show that we can slightly modify our construction of \Cref{thm:main} so that the dimension of the host space will be $O(k\log n)$, while maintaining the same distortion guarantee. This implies that graphs excluding $H$ as a minor admit an embedding into $\ell_\infty^{O(g(H)\cdot\log^2n)}$ with constant distortion (this constant is independent of $H$). See \Cref{thm:dimension} and the discussion therein.

\begin{table*}[]
	\centering
		\label{tab:resultSummery}
		\begin{tabular}{|l|l|lr|}
			\hline   \textbf{Graph Family}& \textbf{Our results.}  $\qquad$& \textbf{Previous
				results} & \\ \hline \hline
			Pathwidth $k$ & $O(k^{\nicefrac{1}{p}})$ & $(4k)^{k^3+1}$ into
			$\ell_1$ & \cite{LS13} \\ \hline
			Treewidth $k$ & $O((k \log n)^{\nicefrac{1}{p}})$ & $O(k^{1-1/p} \cdot
			\log^{1/p} n)$ & \cite{KLMN04} \\
			& & $O((\log (k\log n))^{1-1/p}(\log^{1/p} n))$ & \cite{KK16} \\ \hline
			Planar & $O(\log ^{\nicefrac{1}{p}}n)$  &
			$O(\log^{\nicefrac{1}{p}}n)$ & \cite{Rao99} \\ \hline
			$H$-minor-free & $O((g(H)\log n)^{\nicefrac{1}{p}})$ &
			$O(|H|^{1-\nicefrac{1}{p}}\log^{\nicefrac{1}{p}}n)$ & \cite{AGGNT19}+\cite{KLMN04} \\ \hline
            $H$-minor-free & $O(1)$ into $\ell_{\infty}^{O(g(H)\cdot\log^2n)}$ &
			$O(|H|^2)$ into $\ell_\infty^{O(3^{|H|}\log|H|\log n)}$& \cite{KLMN04}   \\ \hline
		\end{tabular}
		\caption{\small Our and previous results for embedding certain graph families into $\ell_p$. (For $H$-minor-free graphs, $g(H)$ is some function of $|H|$.)}
\end{table*}

Our result of \Cref{thm:path-main} (and thus also
\Cref{thm:main}) is asymptotically tight for any fixed $p >1$. 
\footnote{A previous version of this paper contained a lower bound for the case $p=1$. The proof of this lower bound was wrong, and it is removed from the current version.}
The
family exhibiting this fact is the diamond graphs. 
\begin{theorem}[\cite{NR02,LN04,JS09}]
	\label{thm:Diamond}
	For any fixed $p >1$ and every $k\ge 1$, there exists a graph
	$G=(V,E)$ with pathwidth-$k$, such that every embedding
	$f:V\to\ell_{p}$ has distortion
	$\Omega(k^{\min\{\nicefrac{1}{p},\nicefrac{1}{2}\}})$.
\end{theorem}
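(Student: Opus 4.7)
The plan is to use the $k$-level diamond graph $D_k$ as the extremal example. Recursively, let $D_0$ be a single edge; obtain $D_{i+1}$ from $D_i$ by replacing each edge by a $4$-cycle (two internally vertex-disjoint length-$2$ paths between its endpoints). Then $D_k$ has $\Theta(4^k)$ vertices and diameter $2^k$, and one checks that its pathwidth is $\Theta(k)$: the upper bound $k+O(1)$ comes from a recursive path decomposition that keeps in each bag the two ``poles'' of every currently open sub-diamond along the recursion, and the matching lower bound $\Omega(k)$ follows from the fact that $D_k$ contains $2^{\Omega(k)}$ internally vertex-disjoint paths between its two extreme poles, which via a standard bramble / vertex-separator argument forces the pathwidth to grow linearly with~$k$.

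Having fixed the graph, I would then invoke the metric lower bounds from the cited literature. Newman and Rabinovich \cite{NR02} prove, by repeatedly applying the parallelogram (``short-diagonals'') inequality to every diamond at every level, that every embedding of $D_k$ into Hilbert space has distortion at least $\Omega(\sqrt{k})$; this handles the case $p=2$. For $1<p\le 2$ the same type of inductive inequality follows from Markov convexity of $\ell_p$, as developed by Lee, Naor and Peres \cite{LN04}, again yielding distortion $\Omega(\sqrt{k})$. For $p\ge 2$ the sharper bound $\Omega(k^{1/p})$ is obtained by Mendel and Naor \cite{MN13} from metric (Markov) cotype $p$ of $\ell_p$, applied to the same recursive family of diamonds. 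Combining the two ranges produces the unified lower bound $\Omega(k^{\min\{1/p,\,1/2\}})$ stated in the theorem.

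The main obstacle is really only bookkeeping: verifying cleanly that the level-$k$ diamond graph has pathwidth truly linear in $k$ (rather than $k\log k$ or $\log n$), and matching up the ``depth'' parameter used in the cited metric lower bounds with the pathwidth parameter~$k$ in the statement. The metric non-embeddability results themselves are fully contained in \cite{NR02,LN04,MN13}; our contribution is the observation that the extremal graphs in those papers are pathwidth-$\Theta(k)$, so that the distortion upper bound of \Cref{thm:path-main} is asymptotically tight for every fixed $p>1$.
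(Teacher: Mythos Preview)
Your approach is essentially the same as the paper's: take the level-$k$ diamond graph, verify it has pathwidth $O(k)$, and cite the known $\ell_p$ non-embeddability bounds for this family. Two points are worth correcting, though.

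First, the claim that $D_k$ contains $2^{\Omega(k)}$ internally vertex-disjoint paths between its two poles is false. In fact the two level-$1$ diagonal vertices form a $2$-vertex $s$--$t$ separator in every $D_k$ with $k\ge 1$, so there are only two such paths. Fortunately this claim is also unnecessary: to prove the theorem you only need the \emph{upper} bound $\mathrm{pw}(D_k)\le k+1$ (which the paper states and which your recursive decomposition also gives); a lower bound on the pathwidth of $D_k$ plays no role in establishing that some pathwidth-$k$ graph has large distortion.

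Second, for the range $p\ge 2$ you attribute to \cite{MN13} a Markov-cotype argument applied to the diamond graph. The paper notes that \cite{MN13} in fact works with the Laakso graph, and instead supplies (in \Cref{app:LB}) a direct, self-contained proof for the diamond graph: it establishes an $\ell_p$ quadrilateral inequality $\|a-c\|_p^p+\|b-d\|_p^p\le 2^{p-2}\bigl(\|a-b\|_p^p+\|b-c\|_p^p+\|c-d\|_p^p+\|d-a\|_p^p\bigr)$ and iterates it across all levels to obtain a Poincar\'e-type inequality yielding distortion $\ge\bigl((k+1)/2^{p-2}\bigr)^{1/p}$. This is more elementary than invoking Markov cotype and stays entirely within the diamond family, which is convenient since the pathwidth bound is already verified there.
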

The bound in \Cref{thm:Diamond} was proven first for $p=2$ in \cite{NR02}, generalized to  $1<p\le 2$ in \cite{LN04} and for $p\ge2$ by \cite{JS09} (see also \cite{MN13,JLM11}).

\subsection{Technical Ideas}\label{sec:technical}
Many known embeddings ~\cite{B85, Rao99, KLMN04, ABN06} are based on a collection of
1-dimensional embeddings, where we embed each point to its distance
from a given subset of points.
We follow this approach, but differ in two aspects.  Firstly, the subset
of points we use is not based on random sampling \cite{B85} or probabilistic
clustering \cite{Rao99}. Rather, inspired by the works of~\cite{And86}
and~\cite{AGGNT19}, the subset used is a geodesic shortest path. The
second is that our embedding is not 1-dimensional but 2-dimensional:
this seemingly small change crucially allows us to use the structure of
the shortest paths to our advantage.

The \SPD induces a collection of shortest paths (each shortest path lies in some connected component). A natural initial attempt is to
embed a vertex $v$  relative to a geodesic path $P$ using two
dimensions:\footnote{In fact, we use different dimensions for each connected component.}

\begin{itemize}
	\item The \textit{first} coordinate $\Delta_1$ is the distance to the path $d(v,P)$.
	
	\item The \textit{second} coordinate $\Delta_2$ is the distance $d(v,r)$ to the
	endpoint of the path, called its ``root''.
\end{itemize}

\begin{figure}[H]
	\centering{\includegraphics[scale=1]{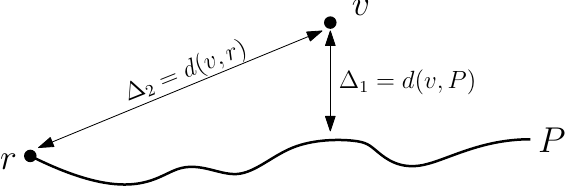}}
	\caption{\label{fig:intuition}\small An illustration of our initial
		attempt. The first coordinate $\Delta_1$ is the distance to the path
		$d(v,P)$. The second coordinate $\Delta_2$ is the distance $d(v,r)$
		to the endpoint of the path, called its ``root''.}
\end{figure}
Unfortunately, this embedding may have unbounded expansion: If two vertices $u,v$ are separated by some shortest path, in future iterations $v$ may
have a large distance to the root of a path $P$ in its component, while
$u$ has zero in that coordinate (because it's not in that component), incurring a large stretch.
The natural fix is to enforce a Lipschitz condition on every coordinate: for $v$ in cluster $X$, we \textit{truncate} the value $v$ can receive at $O(d_G(v,V\setminus X))$. I.e., a vertex close to the boundary
of $X$ cannot get a large value. Using the fact that the \SPD has depth $k$, each vertex will have only $O(k)$ nonzero coordinates, which implies expansion $O(k^{1/p})$.

To bound the contraction, for each pair $u,v$ we consider the {\em
	first} path $P$ in the \SPD that lies ``close'' to $\{u,v\}$ or
separates them to different connected components. Then we show that at
least one of the two coordinates should give sufficient contribution.

But what about the effect of truncation on contraction? A careful
recursive argument shows that the contribution to $u,v$ from the first
coordinate (the distance from the path $P$) is essentially not affected
by this truncation. Hence the argument in cases (a) and (b) of
\Cref{fig:Cases} still works. However, the argument using the distance
to the root of $P$, case~(c), can be ruined. Solving this  issue requires some new non-trivial ideas. Our solution is to introduce a probabilistic sawtooth function that replaces the simple truncation. The main technical part of the paper is devoted to showing that a collection of these functions for all possible distance scales, with appropriate random shifts, suffices to control the expected contraction in case (c), for {\em all relevant pairs simultaneously}.

\begin{figure*}[]
	\centering{\includegraphics[width=1.0\textwidth]{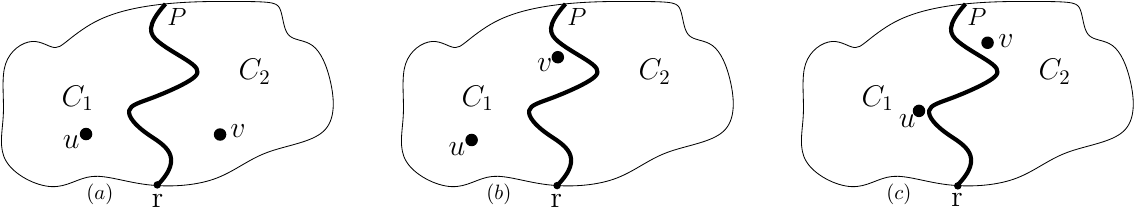}}
	\caption{\label{fig:Cases}\small 
		A shortest path $P$, rooted at $r$, partitions cluster $X$ into clusters $C_1$ and $C_2$.  In cases (a) and
		(b), the first coordinate (the distance to $P$) provides sufficient
		contribution. In case (c) the second coordinate (the distance to
		root $r$)
		provides the contribution. }
\end{figure*}

\subsection{Other Related Work}
\label{sec:related}

There has been work on embedding several other graph families into normed spaces:
Chekuri et al. \cite{CGNRS01} extend the Okamura and Seymour bound for outerplanar graphs to $k$-outerplanar graphs, and showed that these embed into $\ell_1$ with distortion
$2^{O(k)}$.
Rao \cite{Rao99} (see also \cite{KLMN04}) embed planar graphs into
$\ell_p$ with distortion $O(\log^{1/p} n)$. For graphs with genus $g$, \cite{LS10} showed an embedding into Euclidean space with distortion $O(\log g+\sqrt{\log n})$.
Finally, for $H$-minor-free graphs, combining the results
of \cite{AGGNT19, KLMN04} give $\ell_p$-embeddings with
$O(|H|^{1-1/p}\log^{1/p} n)$ distortion.

Following ~\cite{And86,Mil86}, the idea of using geodesic shortest paths
to decompose the graph has been used for many algorithmic tasks: MPLS
routing \cite{GKmpls01}, directed connectivity, distance labels and
compact routing \cite{T04}, object location \cite{AG06}, and nearest neighbor search \cite{ACKW15}. 

Given a tree $T$, Matou{\v{s}}ek \cite{M99} recursively defined the \emph{caterpillar dimension} $\cdim(T)$ as follows: the $\cdim$ of a singleton vertex is $0$. The $\cdim$ of a tree is $k$ if there are a set of paths $P_1,\dots,P_s$ intersecting in a single vertex, such that removing the edges in all these paths (as opposed to vertices in our definition) results in connected components each with  $\cdim$ at most $k-1$.
Matou{\v{s}}ek showed that every tree $T$ with $\cdim(T)$ embeds into every $\ell_p$ space with distortion $O_p(\log(\cdim(T)))^{\min\{\frac12,\frac1p\}}$. 

In a follow up paper, \cite{Fil20faces} (the second author) generalized our definition of \SPD to partial-\SPD (allowing the lower level in the partition hierarchy to be general subgraph rather than only a shortest path).
Given a weighted planar graph $G=(V,E,w)$ with a subset of terminals $K$, a face cover is a subset of faces such that every terminal lies on some face from the cover.
Given a face cover of size $\gamma$, using our embedding result for \SPD, \cite{Fil20faces} shows that the terminal set $K$ can be embedded into $\ell_1$ with distortion $O(\sqrt{\log \gamma})$.

In another follow-up paper, the second author \cite{Fil20Scattering}, created strong sparse partitions for graphs with bounded \SPDdepth. These were later used to obtain constant distortion solution for the Steiner point removal problem, and also creating approximation algorithms for the universal Steiner tree problem, and universal Traveling salesman problem \cite{JLNRS05}.

\section{Preliminaries and Notation}
\label{sec:preliminaries}

For $k \in \Z$, let $[k]:=\{1,\dots,k\}$. For $p \geq 1$, the
$\ell_p$-norm of a vector $x=(x_1,\dots,x_d)\in \R^d$ is $\Vert
x \Vert_{p} : =(\sum_{i=1}^{d}|x_{i}|^{p})^{1/{p}}$, where $\Vert
x\Vert_{\infty} :=\max_{i}|x_{i}|$.

\emph{Doubling dimension.}  The doubling dimension of a metric is a
measure of its local ``growth rate''. Formally, a metric space $(X,d)$ has
\emph{doubling dimension} $\lambda_X$ if for every $x\in X$ and radius
$r$, the ball $B(x,r)$ can be covered by $2^{\lambda_X}$ balls of radius
$\frac{r}{2}$. A family is \emph{doubling} if the doubling dimension of all metrics in it is bounded by some universal constant.

\emph{Graphs.} We consider connected undirected graphs $G=(V,E)$ with edge weights
$w: E \to \R_{> 0}$. Let $d_{G}$ denote the shortest path metric in
$G$; we drop subscripts when there is no ambiguity. For a vertex $x\in
V$ and a set $A\subseteq V$, let $d_{G}(x,A):=\min_{a\in A}d(x,a)$,
where $d_{G}(x,\emptyset):= \infty$.  For a subset of vertices
$A\subseteq V$, let $G[A]$ denote the induced graph on $A$, and let
$d_{A}:=d_{G[A]}$ be the shortest path metric in the induced graph. Let
$G\setminus A := G[V\setminus A]$ be the graph after deleting the vertex
set $A$ from $G$.

\emph{Special graph families.}
Given a graph $G=(V,E)$, a \emph{tree decomposition} of $G$ is a tree
$T$ with nodes $B_1,\dots,B_s$ (called \emph{bags}) where each $B_i$ is
a subset of $V$ such that the following properties hold:
\begin{OneLiners}
	\item For every edge $\{u,v\}\in E$, there is a bag $B_i$ containing
	both $u$ and $v$.
	\item For every vertex $v\in V$, the set of bags containing $v$ form a
	connected subtree of $T$.
\end{OneLiners}
The \emph{width} of a tree decomposition is $\max_i\{|B_i|-1\}$. The \emph{treewidth} of $G$ is the minimal
width of a tree decomposition of $G$.

A \emph{path decomposition} of $G$ is a special kind of tree
decomposition where the underlying tree is a path. The \emph{pathwidth}
of $G$ is the minimal width of a path decomposition of $G$.

A graph $H$ is a \emph{minor} of a graph $G$ if we can obtain $H$ from
$G$ by edge deletions/contractions, and vertex deletions.  A graph
family $\mathcal{G}$ is \emph{$H$-minor-free} if no graph
$G\in\mathcal{G}$ has $H$ as a minor.

\subsection{The Sawtooth Function}

An important component in our embeddings will be the following sawtooth
function. For $t\in\mathbb{N}$, we define $g_{t}:\R_{+}\rightarrow\R$
the \emph{sawtooth} function w.r.t.\ $2^{t}$ as follows. For $x\ge0$, if
$q_{x} := \lfloor x/2^{t+1} \rfloor$ then
\begin{gather*}
g_{t}(x) :=2^{t}-\left|x-\left(q_{x}\cdot2^{t+1}+2^{t}\right)\right|.
\end{gather*}
\Cref{fig:g_tGraph} can help visualize this function. The following observation is straightforward.

\begin{figure}[t]
	\centering{\includegraphics[scale=0.62]{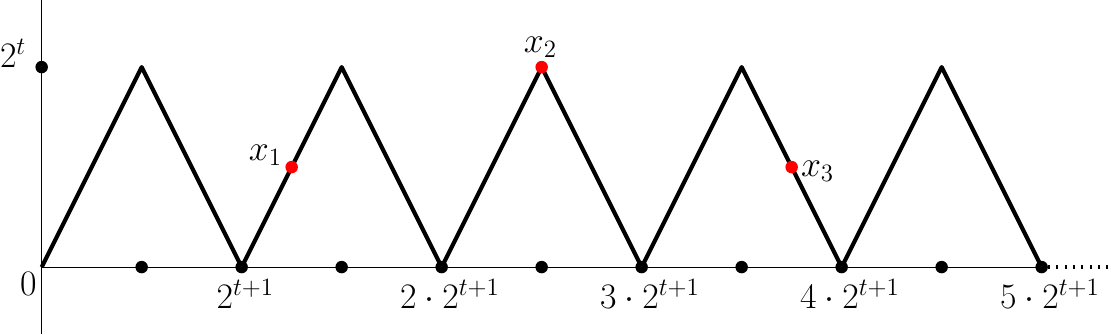}}
	\caption{\label{fig:g_tGraph} \small  The graph of the ``sawtooth''
		function $g_{t}$. The points $x_1=5\cdot 2^{t-1}$ and
		$x_3=15\cdot 2^{t-1}$ are mapped to $2^{t-1}$, while
		$x_2=10\cdot 2^{t-1}$ is mapped to $2^t$.}
\end{figure}

\begin{observation}\label{Obs:Sawtooth}
	The Sawtooth function $g_{t}$ is $1$-Lipschitz, bounded by $2^{t}$,
	periodic with period $2^{t+1}$.
\end{observation}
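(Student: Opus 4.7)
The plan is to verify the three properties directly from the explicit formula, exploiting the fact that $g_t$ is essentially a triangle wave rescaled to amplitude $2^t$ and period $2^{t+1}$.

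First I would address the bound. Writing $r := x - q_x \cdot 2^{t+1}$, the definition of $q_x = \lfloor x / 2^{t+1}\rfloor$ gives $r \in [0, 2^{t+1})$, so $r - 2^t \in [-2^t, 2^t)$, and therefore $|r - 2^t| \in [0, 2^t]$. Plugging back in, $g_t(x) = 2^t - |r - 2^t| \in [0, 2^t]$, which gives both the upper bound $2^t$ and (as a bonus) non-negativity.

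Next I would verify periodicity. Replacing $x$ by $x + 2^{t+1}$ simply increments $q_x$ by one, so $q_{x+2^{t+1}} \cdot 2^{t+1} = q_x \cdot 2^{t+1} + 2^{t+1}$, and the two increments cancel inside the absolute value. Hence $g_t(x + 2^{t+1}) = g_t(x)$ for every $x \ge 0$.

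Finally, for the $1$-Lipschitz property, by periodicity it suffices to check a single period, say $x \in [q\cdot 2^{t+1}, (q+1)\cdot 2^{t+1})$ for a fixed integer $q$. On this interval $q_x$ is constantly $q$, so $g_t(x) = 2^t - |x - c|$ where $c := q\cdot 2^{t+1} + 2^t$ is a fixed constant. The function $x \mapsto 2^t - |x-c|$ is piecewise linear with slopes $+1$ (for $x < c$) and $-1$ (for $x > c$), hence $1$-Lipschitz on the interval. To extend across period boundaries, note that $g_t$ equals $0$ at each endpoint $q\cdot 2^{t+1}$ of a period, so the two pieces glue continuously and the global Lipschitz constant remains $1$. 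There is no real obstacle here; the only thing to be slightly careful about is checking continuity at the period boundaries so that the within-period Lipschitz bound lifts to a global one.
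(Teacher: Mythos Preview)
Your proposal is correct. The paper does not actually supply a proof of this observation---it is labeled ``straightforward'' and left without argument---so your direct verification of the three properties from the formula is exactly what is intended, and each step checks out.
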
	

To make the proofs cleaner, we define an auxiliary function given
parameters $\alpha \in [0,1]$, $\beta\in [0,4]$:
\begin{gather}
  g_{t, \alpha, \beta}(x) := g_{t}(\beta\cdot x + \alpha\cdot 2^{t+1}). \label{eq:h-def}
\end{gather}
Note that by \Cref{Obs:Sawtooth}, $g_{t, \alpha, \beta}$ is $\beta$-Lipschitz and bounded by $2^{t}$.
The proof of the following lemma appears in \Cref{app:MissBkIdn}.
\begin{restatable}[Sawtooth Lemma]{lemma}{BrkIdn}\label{lem:BrkIdn}
	Let $x,y\in\R_{+}$. Let $\alpha\in[0,1]$, $\beta\in [0,4]$ be drawn
	uniformly and independently. The following properties hold:
	\begin{enumerate}
		\item \label{Pr:Avg} $\mathbb{E}_{\alpha,\beta}\left[\;g_{t, \alpha, \beta}(
		x)\;\right]=2^{t-1}$.
		\item \label{Pr:SmallandLargeGap}
		$\mathbb{E}_{\alpha,\beta}\left[\,\left|g_{t, \alpha, \beta}(
		x)-g_{t, \alpha, \beta}(y)\right|\,\right]= \Omega(\min\{|x-y|,
            2^t\})$.
	\end{enumerate}
\end{restatable}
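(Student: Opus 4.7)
My plan is to reduce every $\mathbb{E}_\alpha$ to a period-average of $g_t$ via the change of variables $u := \beta x + \alpha\cdot 2^{t+1} \pmod{2^{t+1}}$. Since $\alpha\cdot 2^{t+1}$ sweeps exactly one full period uniformly as $\alpha$ ranges over $[0,1]$, and since $g_t$ has period $2^{t+1}$ by \obsref{Obs:Sawtooth}, the induced variable $u$ is uniformly distributed on $[0, 2^{t+1}]$ regardless of $\beta$ and $x$. This step immediately removes both $\beta$ and $x$ from every inner expectation.

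For Part~\ref{Pr:Avg}, this reduction gives $\mathbb{E}_\alpha[g_{t,\alpha,\beta}(x)] = \frac{1}{2^{t+1}}\int_0^{2^{t+1}} g_t(u)\,du = 2^{t-1}$, since on one period $g_t$ is a symmetric triangle of height $2^t$ and base $2^{t+1}$, whose average is exactly half its height. This value is independent of $\beta$, so taking the outer expectation does nothing.

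For Part~\ref{Pr:SmallandLargeGap}, set $\delta := |x-y|$. The same change of variables converts the quantity into $\mathbb{E}_\beta[h(\beta\delta)]$, where $h(s) := \frac{1}{2^{t+1}}\int_0^{2^{t+1}} |g_t(u+s) - g_t(u)|\,du$ is a function of $s$ alone. I will compute $h$ explicitly on $[0, 2^t]$ by partitioning one period of $u$ into four subintervals, according to whether $u$ and $u+s$ lie on the same or different monotone arm of the triangle wave. On the ``same-arm'' pieces the integrand is the constant $s$ (slopes of $g_t$ are $\pm 1$), and on the two ``crossing'' pieces the integrand simplifies to $|2^{t+1}-2u-s|$, whose integral is $s^2/2$ on each. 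Summing gives the closed form $h(s) = s\!\left(1 - \frac{s}{2^{t+1}}\right)$, and the symmetry $g_t(2^{t+1}-u) = g_t(u)$ together with $2^{t+1}$-periodicity extend $h$ to all $s \ge 0$. In particular, $h(s) \ge s/2$ on $[0, 2^t]$ and a one-line calculation shows the period-average of $h$ equals $2^t/3$.

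To conclude, I will bound $\mathbb{E}_\beta[h(\beta\delta)] = \frac{1}{4\delta}\int_0^{4\delta} h(s)\,ds$ below by $\Omega(\min(\delta, 2^t))$ by splitting on $\delta$. If $\delta \le 2^t$, then restricting the outer expectation to $\beta \in [0,1]$ (a constant-probability event) ensures $\beta\delta \le 2^t$, hence $h(\beta\delta) \ge \beta\delta/2$, giving $\mathbb{E}_\beta[h(\beta\delta)] = \Omega(\delta)$. If $\delta > 2^t$, then $[0, 4\delta]$ covers strictly more than a full period of $h$, so the running average $\frac{1}{4\delta}\int_0^{4\delta} h$ is within a constant factor of the period-average $2^t/3$, giving $\Omega(2^t)$. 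I expect the main obstacle to be the four-region bookkeeping that produces the closed form for $h(s)$; once that is in hand, the two-case analysis over $\beta$ is routine.
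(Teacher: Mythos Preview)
Your proposal is correct and follows essentially the same route as the paper. Both proofs reduce via the $\alpha$-shift to the closed form $h(s)=\dfrac{s(2^{t+1}-s)}{2^{t+1}}$ for the period-averaged absolute difference (the paper states this as a standalone claim), and then split on the size of $|x-y|$ to handle the $\beta$-average. The only cosmetic differences are that the paper uses the threshold $2^{t-1}$ rather than $2^t$, computes the small-case $\beta$-integral exactly instead of restricting to $\beta\in[0,1]$, and in the large case writes out the sum over complete periods explicitly rather than appealing to the running-average versus period-average comparison as you do.
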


\section{Shortest Path Decompositions}
\label{sec:spds}

Our embeddings will crucially depend on the notion of shortest path
decompositions.
In the introduction we provided a recursive definition for \SPD. Here we
show an equivalent definition which will be more suitable for our
purposes.
\begin{definition}[Shortest Path Decomposition\label{def:SPD} (\SPD)]
	Given a weighted graph $G=(V,E,w)$, a \SPD of depth $k$ is a pair
	$\left\{ \mathcal{X},\mathcal{P}\right\} $, where $\mathcal{X}$ is a
	collection $\mathcal{X}_{1},\dots,\mathcal{X}_{k}$ of partial partitions of
	$V$ \footnote{i.e. for every $X\in \mathcal{X}_i$, $X\subseteq V$, and
		for every different subsets $X,X'\in\mathcal{X}_{i}$, $X\cap
		X'=\emptyset$.}, and $\mathcal{P}$ is a collection of sets of paths
	$\mathcal{P}_{1},\dots,\mathcal{P}_{k}$, where $\mathcal{X}_1=\{V\}$,
	$\mathcal{X}_k=\mathcal{P}_k$, and the following properties hold:
	\begin{enumerate}
		\item For every $1\leq i\leq k$ and every subset $X\in\mathcal{X}_{i}$,
		there exist a unique path $P_X\in\mathcal{P}_{i}$ such that $P_{X}$
		is a shortest path in $G[X]$.
		\item For every $2\leq i\leq k$, $\mathcal{X}_i$ consists of all
		connected components of $G[X\setminus P_{X}]$ over all $X\in\mathcal{X}_{i-1}$.
	\end{enumerate}
\end{definition}

In other words, $\bigcup_{i=1}^k\mathcal{P}_k$ is a partition of $V$
into paths, where each path $P_X$ is a shortest path in the component
$X$ it belongs to at the point it is deleted.

For a given graph $G$ let $\SPDdepth(G)$ be the minimum $k$ such that
$G$ admits an \SPD of depth $k$. For a given family of graphs
$\mathcal{G}$ let $\SPDdepth(\mathcal{G}) := \max_{G \in
	\mathcal{G}}\{\SPDdepth(G)\}$. In the following we consider the
\SPDdepth of some graph families.

\subsection{The SPD Depth for Various Graph Families}

One advantage of defining the shortest path decomposition is that
several well-known graph families have bounded depth \SPD.

\begin{itemize}
	\item \emph{Pathwidth.} Every graph $G=(V,E,w)$ with \emph{pathwidth}
	$k$ has an \SPDdepth of $k+1$. Indeed, let $\mathcal{T}= \langle
	\mathcal{B}_1,\dots, \mathcal{B}_s \rangle$ be a path decomposition of
	$G$, where $\mathcal{B}_1,\mathcal{B}_s$ are the two bags at the end
	of this path. Choose arbitrary vertices $x\in \mathcal{B}_1$ and
	$y\in\mathcal{B}_s$, and let $P$ be a shortest path in $G$ from $x$ to
	$y$. By the definition of a path decomposition, the path $P$ contains
	at least one vertex from every bag $\mathcal{B}_i$. Hence, deleting
	the vertices of $P$ would reduce the size of each bag by one;
	consequently each connected component of $G\setminus P$ has pathwidth $k-1$, and by induction \SPDdepth $k$. Finally, a connected component of pathwidth 0 is necessarily a singleton, which has \SPDdepth $1$.

	\item \sloppy\emph{Treewidth.} Since every treewidth-$k$ graph has pathwidth $O(k\log
	n)$, treewidth-$k$ graphs have \SPDdepth $O(k\log n)$.
	
	\item \emph{Planar.} Using cycle separators~\cite{Mil86} as
	in~\cite{T04, GKmpls01}, every planar graph has \SPDdepth
	$O(\log n)$; this follows as each cycle separator can be constructed
	as union of two shortest paths.
	
	\item \emph{Minor-free.} Finally, every $H$-minor-free graph
	admits a balanced separator consisting of $g(H)$ shortest paths (for
	some function $g$)~\cite{AG06}, and hence has an \SPDdepth
	$O(g(H)\cdot \log n)$.
\end{itemize}

Combining these observation with~\Cref{thm:main}, we get the
following set of results:
\begin{corollary}
	Consider an $n$-vertex weighted graph $G$, \Cref{thm:main}
	implies the following:
	\begin{OneLiners}
		\item\label{Cor:path} If $G$ has pathwidth $k$, it embeds into
		$\ell_p$ with distortion $O(k^{\nicefrac{1}{p}})$.
		
		\item\label{Cor:tree} \sloppy If $G$ has treewidth $k$, it embeds into
		$\ell_p$ with distortion $O((k\log n)^{\nicefrac{1}{p}})$.
		
		\item\label{Cor:planar} If $G$ is planar, it embeds into $\ell_p$ with
		distortion $O(\log^{\nicefrac{1}{p}} n)$.
		
		\item\label{Cor:Minor} For every fixed $H$, if $G$ excludes $H$ as a
		minor, it embeds into $\ell_p$ with distortion $O(\log^{\nicefrac{1}{p}}
		n)$, where the constant in the big-O depends on $H$.
	\end{OneLiners}
\end{corollary}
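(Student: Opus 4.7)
The plan is direct: each item follows by plugging the relevant \SPDdepth bound into \Cref{thm:main}. The four \SPDdepth bounds have all been sketched in the bulleted discussion just before the corollary, so my proof is really a bookkeeping exercise that turns each sketch into the claimed distortion.

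For item~\ref{Cor:path}, I would induct on the pathwidth. Given a path decomposition $\mathcal{B}_1,\dots,\mathcal{B}_s$ of width $k$, pick vertices $x \in \mathcal{B}_1$ and $y \in \mathcal{B}_s$ and let $P$ be a shortest $x$--$y$ path in $G$. Standard properties of path decompositions imply $P$ meets every bag, so deleting $P$ reduces the width of every bag by at least one; each connected component of $G \setminus P$ then has pathwidth at most $k-1$ and, inductively, \SPDdepth at most $k$. A singleton is pathwidth~$0$ and \SPDdepth~$1$, so overall $\SPDdepth(G) \le k+1$, and \Cref{thm:main} gives distortion $O((k+1)^{\nicefrac{1}{p}}) = O(k^{\nicefrac{1}{p}})$.

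For item~\ref{Cor:tree}, I would cite the well-known fact that an $n$-vertex tree has pathwidth $O(\log n)$, which combined with the tree-to-path-decomposition argument (pick a centroid bag, recurse on both sides and concatenate) shows that a graph of treewidth $k$ has pathwidth $O(k \log n)$; hence $\SPDdepth = O(k \log n)$ by item~\ref{Cor:path}, and \Cref{thm:main} yields $O((k\log n)^{\nicefrac{1}{p}})$. For item~\ref{Cor:planar}, I would invoke the Miller-style cycle separator: every planar graph has a simple-cycle separator that is the union of two shortest paths and that removes a constant fraction of the vertices. Recursively applying this to the two connected sides, using two levels of \SPD per step, gives an \SPD of depth $O(\log n)$; \Cref{thm:main} then produces distortion $O(\log^{\nicefrac{1}{p}} n)$. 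Item~\ref{Cor:Minor} is analogous, but uses the Abraham--Gavoille balanced separator consisting of $g(H)$ shortest paths for $H$-minor-free graphs, yielding $\SPDdepth = O(g(H) \log n)$ and, for $H$ fixed, distortion $O(\log^{\nicefrac{1}{p}} n)$.

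There is no real obstacle: \Cref{thm:main} does all the work, and the four \SPDdepth bounds are either immediate (pathwidth), a classical pathwidth-of-trees lemma (treewidth), or standard planar/minor-free separator constructions applied recursively. The only place where one should be a little careful is in the planar and minor-free recursions, where a separator is a \emph{union} of several shortest paths rather than a single one; this costs a constant (respectively $g(H)$) factor in depth per level of recursion, but since the \SPDdepth is defined one shortest path at a time, each such separator contributes this constant factor and leaves the $O(\log n)$ recursion depth intact.
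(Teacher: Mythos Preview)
Your proposal is correct and follows essentially the same approach as the paper: the corollary is stated immediately after the bulleted list of \SPDdepth bounds in Section~\ref{sec:spds}, and the paper's ``proof'' is nothing more than that list combined with \Cref{thm:main}. Your write-up matches those sketches item by item, including the pathwidth induction, the treewidth-to-pathwidth reduction, and the recursive shortest-path separators for the planar and minor-free cases; your closing remark about a separator being a \emph{union} of shortest paths (costing a constant or $g(H)$ factor per recursion level) is exactly the subtlety the paper glosses over.
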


As mentioned in \Cref{sec:intro}, we get a substantial improvement for
the pathwidth case.  Our result for treewidth improves upon that
from~\cite{KLMN04} for $p>2$; they got $O(k^{1-\nicefrac{1}{p}}(\log
n)^{\nicefrac{1}{p}})$ distortion compared to our $O((k \log
n)^{\nicefrac{1}{p}})$. Our result appears to be closer to the truth,
since the distortion tends to $O(1)$ as
$p\to\infty$.
Our results for planar graphs
match the current state-of-the-art.

Finally our results for minor-free graphs depend on the Robertson-Seymour
decomposition, and hence are currently better only for large values of
$p$. (It remains an open question to improve the \SPDdepth of
$H$-minor-free graphs to have a $\poly(|H|) \log n$ dependence, perhaps
using the ideas from~\cite{AGGNT19}.)
In general, we hope that our results will be useful in getting other
embedding results, and will spur further work on understanding shortest
path separators.

We note that there exist graphs with large \SPDdepth. For instance, the clique graph $K_{n}$ has \SPDdepth of $\frac n2$, as each shortest path contains at most $2$ vertices.
Moreover, there are sparse graphs with very large \SPDdepth. Specifically, an $n$-vertex constant degree expander has  \SPDdepth of $n^{\Omega(1)}$. Indeed, denote by $k$ the \SPDdepth of some constant degree expander $G$.
According to \Cref{thm:dimension}, $G$ can be embedded into $\ell_{\infty}^{O(k\log n)}$ with distortion $O(1)$. However, according to Rabinovich \cite{Rab08}, in order to embed a constant degree expander into $\ell_{\infty}$ with distortion $D$, $n^{\Omega(\nicefrac1D)}$ coordinates are required. It follows that $k=n^{\Omega(1)}$.

\begin{wrapfigure}{r}{0.20\textwidth}
	\begin{center}
		\vspace{-20pt}
		\includegraphics[width=0.45\textwidth]{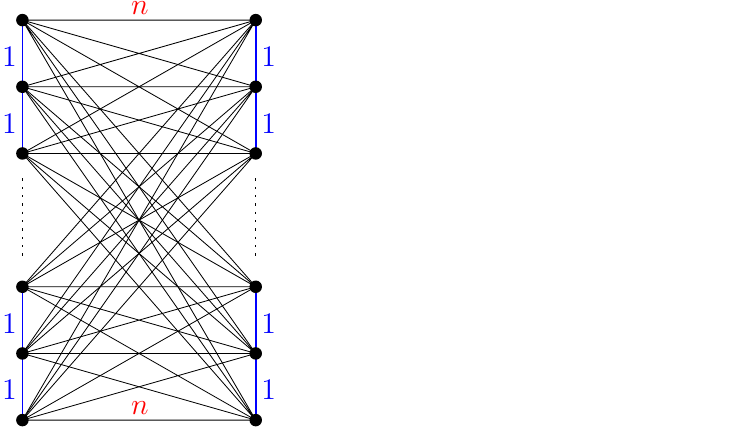}
		\vspace{-7pt}
	\end{center}
	\vspace{-20pt}
\end{wrapfigure}

On the other hand, there are graphs with \SPDdepth $2$ that contain arbitrarily large cliques. For example see the graph drawn on the right. The graph consist of two sets $\{L,R\}$ each containing $n$ vertices. The graph restricted to the vertices of $L$ (resp. $R$) consist of a shortest path with unit weight edges.
In addition, for every pair of vertices $v\in L$ and $u\in R$ we add an edge of weight $n$.
Note that $G$ contains the full bipartite graph $K_{n,n}$ as a subgraph (and in particular $K_n$ as a minor). It is straightforward that $G$ has \SPDdepth $2$.
Note also, that by subdividing each edge of weight $n$ to $n$ unit weight edges, we will get an unweighted graph of \SPDdepth $3$ that contains $K_n$ as a minor.

\section{The Embedding Algorithm}
\label{sec:embed}

Let $G=\left(V,E\right)$ be a weighted graph, and let $\left\{
\mathcal{X},\mathcal{P}\right\} =\left\{ \left\{ \mathcal{X}_{1},,\dots,\mathcal{X}_{k}\right\} ,\left\{ \mathcal{P}_{1},\dots,\mathcal{P}_{k}\right\} \right\} $ be an \SPD of depth
$k$ for $G$. By scaling, we can assume that the minimum weight of an
edge is $1$; let $M\in \mathbb{N}$ be the minimal such that the diameter
of $G$ is strictly bounded by $2^M$. Pick $\alpha\in[0,1]$ and
$\beta\in[0,4]$ uniformly and independently.

For every $i \in [k]$, and $X\in\mathcal{X}_{i}$, we now construct an
embedding $f_{X}:V\rightarrow\R^{D}$ (for some number of dimensions
$D\in\mathbb{N}$).  This map $f_{X}$ consists of two parts.

\medskip 	
\underline{\emph{First coordinate: Distance to the Path.}} The first coordinate of
the embedding implements the distance to the path $P_X$, and is denoted
by ${f}^{\fpath}_{X}$. Let $X_{1},\dots, X_{s}\in\mathcal{X}_{i+1}$ be
the connected components of $G\left[X\setminus P_{X}\right]$ (note that
it is also possible that $s=0$).  We use a separate coordinate for each
$X_{j}$, and hence $f^{\fpath}_X: V \to \R^s$. Moreover, for $v\in X$ we
truncate at $2\,d_G(v,V\setminus X)$ in order to guarantee
Lipschitz-ness.  In particular, the coordinate corresponding to $X_j$ is
set to
\[
\left({f}_{X}^{\fpath}\right)_{X_{j}}(v)=\begin{cases}
\min\left\{ d_X(v,P_{X}),2 d_{G}(v,V\setminus X)\right\} \qquad & \text{if }v\in X_{j},\\
0 & \text{otherwise}.
\end{cases}
\]
See \Cref{fig:fhatIlu} for an illustration.
\begin{figure}[h]
	\centering{\includegraphics[scale=0.8]{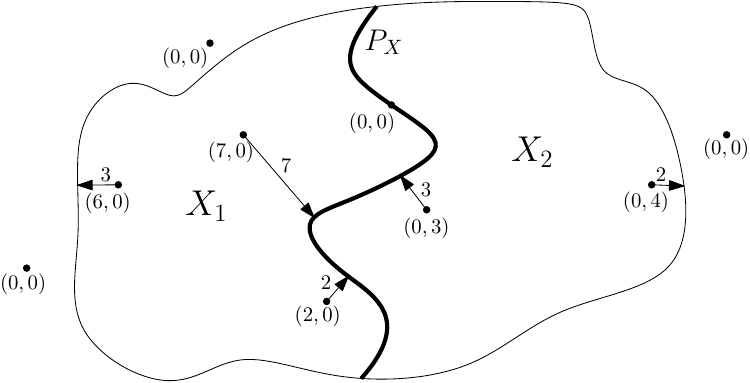}}
	\caption{\label{fig:fhatIlu}\small \emph{The set $X\in\mathcal{X}_{i}$
			surrounded by a closed curve.  The path $P_{X}$ partitions $X$
			into $X_{1},X_{2}\in\mathcal{X}_{i+1}$. The embedding
			${f}^{\fpath}_{X}$ consists of two coordinates, and represented in
			the figure by a horizontal vector next to each vertex, where the
			first entry is w.r.t.\ $X_{1}$ and the second w.r.t.\ $X_{2}$.
			Each point on $P_{X}$, or not in $X$ maps to $0$ in both the
			coordinates.  Each point in $X_{1}$ maps to $\min\left\{
			d_{X}\left(v,P_{X}\right),2 d_{G}\left(v,V\setminus
			X\right)\right\} $ in the first coordinate and to $0$ in the
			second.}}
\end{figure}

\medskip
\underline{\emph{Second coordinate: Distance to the Root.}} The second part
is denoted ${f}^{\froot}_{X}$, which is intended to capture the distance
from the root $r$ of the path.  Again, to get the Lipschitz-ness, we
would like to truncate the value at $2\, d_G(v,V\setminus X)$ as we did
for ${f}^{\fpath}_{X}$.  However, a problem with this idea is that the
root $r$ can be arbitrarily far from some pair $u,v$ that needs
contribution from this coordinate. And hence, even if
$|d_G(u,r)-d_G(v,r)|\approx d_G(u,v)$, there may be no contribution
after the truncation. So we use the sawtooth function.

Specifically, we replace the ideal contribution $d_G(v,r)$ by the
sawtooth function $g_{t}(d_G(v,r))$, where the scale $t$ for the function
is chosen such that $2^t\approx d_G(v,V\setminus X)$. To avoid the case
that two nearby points use two different scales (and hence to guarantee
Lipschitz-ness), we take an appropriate linear combination of the two
distance scales closest to $2\,d_G(v,V\setminus X)$. Recall that the
sawtooth function does not guarantee contribution for $u,v$ due to its
periodicity: we may be unlucky and have $g_{t}(d_G(v,r)) = g_{t}(d_G(u,r))$
even when $d_G(v,r)$ and $d_G(u,r)$ are very different. To guarantee a
large enough contribution for all relevant pairs simultaneously, we add
a random shift $\alpha$, and apply a random ``stretch'' $\beta$ to
$d_G(v,r)$ before feeding it to $g_{t}$. \Cref{lem:BrkIdn} then shows
that many of the choices of $\alpha$ and $\beta$ give substantially
different values for $u,v$.

Formally, the mapping is as follows. The function ${f}^{\froot}_{X}$
consists of $M+1$ coordinates, one for each distance scale $t \in \{0,
1, \ldots, M\}$. The coordinate corresponding to $t$ is denoted by
${f}^{\froot}_{X,t}$.  Let $r$ be an arbitrary endpoint of $P_{X}$; we
will call $r$ the ``root'' of $P_{X}$.
Let $t_{v}\in\mathbb{N}$ be such that $2\, d_{G}(v,V\setminus
X)\in[2^{t_{v}},2^{t_{v}+1})$. Set $\lambda_v=\frac{2\,
	d_{G}(v,V\setminus X)-2^{t_v}}{2^{t_v}}$.  Note that $0\le
\lambda_v<1$.
For $v \in X$, we define
\begin{gather}
{f}_{X,t}^{\froot}(v)=\begin{cases}
\lambda_{v}\cdot g_{t, \alpha, \beta}(d_{X}(v,r))\qquad\qquad & \text{if
}t=t_{v}+1,\\
(1-\lambda_{v})\cdot g_{t, \alpha, \beta}(d_{X}(v,r)) & \text{if }t=t_{v},\\
0 & \text{otherwise}.
\end{cases}
\label{eq:ftilde}
\end{gather}
Recall that $g_{t, \alpha, \beta}$ was defined in~(\ref{eq:h-def}).
For all nodes $v\notin X$, we set ${f}^{\froot}_{X}(v)=\vec{0}$.

Define the map $f_X = {f}^{\fpath}_{X} \oplus {f}^{\froot}_{X}$, and the final
embedding is
\[ f=\bigoplus_{i=1}^k\bigoplus_{X\in\mathcal{X}_{i}}f_{X}, \] i.e., the
concatenation of all the constructed embeddings. Before we start the
analysis, let us record some simple observations.
\begin{observation}
	\label{obv:coordinates}
	For the map $f$ defined above, the following hold:
	\begin{OneLiners}
		\item The number of coordinates in $f$ does not depend on
		$\alpha,\beta$.
		
		\item For every $X\in\mathcal{X}_{i}$ and $v\notin X$, the map
		$f_X(v)$ is the constant vector $\vec{0}$.
		\item For every $X\in\mathcal{X}_{i}$ and $v\in X$, the map $f_{X}$ is
		nonzero in at most $3$ coordinates.
	\end{OneLiners}
	Hence, since $\mathcal{X}_{i}$ is a partial partition of $V$ and the
	depth of the \SPD is $k$, we get that $f(v)$ is nonzero in at most
	$3k$ coordinates for each $v\in V$.
\end{observation}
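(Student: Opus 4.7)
The observation is a direct structural inspection of the embedding $f_X = f_X^{\fpath} \oplus f_X^{\froot}$, so my plan is to verify each of the three bullets separately and then assemble them into the final $3k$ bound, rather than running any real argument. No obstacle is expected here: by design, the randomness $\alpha,\beta$ only reshapes \emph{values} via the sawtooth $g_{t,\alpha,\beta}$, and each level of the \SPD is built to contribute sparsely.

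For the first bullet I would simply count coordinates. The map $f_X^{\fpath}$ has one coordinate per child component $X_j$ of $G[X\setminus P_X]$, so its dimension is $s$, a quantity fixed by $G$ and the \SPD $\{\mathcal{X},\mathcal{P}\}$. The map $f_X^{\froot}$ has exactly $M+1$ coordinates, one for each scale $t\in\{0,1,\dots,M\}$, and $M$ is fixed by the (scaled) diameter of $G$. Summing over the finitely many $X\in\bigcup_i\mathcal{X}_i$ gives a dimension of $f$ that depends only on $(G,\mathcal{X},\mathcal{P})$, not on the random parameters.

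For the second bullet I would read off the definitions. If $v\notin X$ then $v$ lies in no child component $X_j\subseteq X\setminus P_X$, so the case distinction in the definition of $(f_X^{\fpath})_{X_j}(v)$ returns $0$ in every coordinate; and the closing line of equation~(\ref{eq:ftilde}) explicitly sets $f_X^{\froot}(v)=\vec{0}$ whenever $v\notin X$. For the third bullet, fix $v\in X$. Because $X_1,\dots,X_s$ are pairwise disjoint, $v$ belongs to at most one $X_j$, which makes at most one $\fpath$-coordinate nonzero (and none if $v\in P_X$). For $f_X^{\froot}$, the case split in (\ref{eq:ftilde}) leaves ${f}_{X,t}^{\froot}(v)=0$ unless $t\in\{t_v,t_v+1\}$, contributing at most two nonzero coordinates. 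Adding the bounds gives at most $3$ nonzero coordinates of $f_X$ on $v$.

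Finally I would assemble the pieces to get the $3k$ bound. For a fixed $v\in V$, the second bullet says that only those $X$ with $v\in X$ can contribute a nonzero entry to $f(v)$. Since each $\mathcal{X}_i$ is a partial partition, $v$ lies in at most one element of $\mathcal{X}_i$, and the \SPD has $k$ levels, so $v$ belongs to at most $k$ subsets across the whole decomposition. By the third bullet each such $X$ contributes at most $3$ nonzero coordinates to $f(v)$, for a total of at most $3k$ nonzero coordinates. This completes the proof with nothing more than unpacking definitions.
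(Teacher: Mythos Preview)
Your proposal is correct and matches the paper's treatment: the paper states this observation without proof, expecting the reader to verify it by direct inspection of the definitions of $f_X^{\fpath}$ and $f_X^{\froot}$, which is exactly what you have done.
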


\section{The Analysis}
\label{sec:analysis}

The main technical lemmas now show that the per-coordinate expansion is
constant, and that for every pair, there exists a coordinate for which
the {\em expected} contraction is constant.
\begin{restatable}[Expansion Bound]{lemma}{ExpLem}
	\label{lem:expansion}
	For any vertices $u,v$, every coordinate $j$, and every choice
	of $\alpha,\beta$,
	\[ \left|f_j(v)-f_j(u)\right|=O(d_G(u,v)). \]
\end{restatable}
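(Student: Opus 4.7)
The plan is to bound each coordinate of $f$ separately; since $f$ is a concatenation, Lipschitz-ness of every coordinate with constant $O(1)$ yields the claim. The coordinates split into two families, the path coordinates $(f^{\fpath}_X)_{X_j}$ and the root coordinates $f^{\froot}_{X,t}$. Each is an ``extension by zero'': a function defined on $X$ is set to $0$ outside $X$. The common recipe is (i) show the restriction to $X$ is Lipschitz with respect to $d_X$, and (ii) show that the truncation by $2 d_G(\cdot, V\setminus X)$ keeps the function small near the boundary of $X$, so the zero-extension remains Lipschitz with respect to $d_G$.

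For a path coordinate, a short case analysis based on the location of $u,v$ suffices. When both lie in $X_j$ and the shortest $u$-$v$ path in $G$ stays in $X$, we have $d_X(u,v)=d_G(u,v)$ and the coordinate, being the minimum of two $1$-Lipschitz (w.r.t.\ $d_X$) quantities, is $O(1)$-Lipschitz. When the path leaves $X$, then $d_G(u, V\setminus X) + d_G(v, V\setminus X) \le d_G(u,v)$, and the truncation bounds both coordinate values by $O(d_G(u,v))$. The cross cases, where $v\in X_j$ and $u\notin X_j$, follow by noting either (a) $u\notin X$ giving $d_G(v,V\setminus X)\le d_G(u,v)$, or (b) $u$ is in a different component $X_{j'}$ of $X\setminus P_X$, so any in-$X$ path from $u$ to $v$ must cross $P_X$, giving $d_X(v,P_X)\le d_G(u,v)$.

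For a root coordinate, write $f^{\froot}_{X,t}(v)=\phi_v\cdot g_v$ with $g_v:=g_{t,\alpha,\beta}(d_X(v,r))$ and $\phi_v\in\{\lambda_v,1-\lambda_v,0\}$ for $v\in X$. Viewed as a function of $d_v:=d_G(v,V\setminus X)$, the factor $\phi_v$ is a tent supported on $[2^{t-2},2^t)$, peaking at $1$ around $d_v=2^{t-1}$; hence $\phi_v$ is bounded by $1$ and $2^{2-t}$-Lipschitz with respect to $d_G$. Crucially, $\phi_v>0$ forces $d_v\ge 2^{t-2}$, which combined with $g_v\le 2^t$ gives the pointwise estimate $\phi_v g_v \le 2^t \le 4\, d_G(v,V\setminus X)$ for every $v\in X$. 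This one estimate disposes of all the ``escape'' situations at once: when $u\notin X$, and when $u,v\in X$ but the shortest $u$-$v$ path in $G$ leaves $X$, in both of which $d_G(v,V\setminus X)\le d_G(u,v)$ (and symmetrically for $u$).

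The remaining case has $u,v\in X$ with the shortest $u$-$v$ path staying in $X$, so $d_X(u,v)=d_G(u,v)$. Here I would apply the product rule
\[
|\phi_v g_v-\phi_u g_u|\;\le\;|\phi_v-\phi_u|\cdot|g_v|\;+\;|\phi_u|\cdot|g_v-g_u|,
\]
and plug in: by \Cref{Obs:Sawtooth}, $g_{t,\alpha,\beta}$ is $4$-Lipschitz and bounded by $2^t$, while $\phi$ is $2^{2-t}$-Lipschitz and bounded by $1$. The two terms contribute $2^{2-t}\cdot 2^t\cdot d_G(u,v)$ and $1\cdot 4\cdot d_G(u,v)$, both $O(d_G(u,v))$. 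The scale cancellation $2^t\cdot 2^{2-t}=4$ between $\phi$'s Lipschitz constant and $g$'s amplitude is the main technical point of the bound, and it is precisely what is bought by the convex combination across the two adjacent scales $t_v,t_v+1$ in the definition of $f^{\froot}_{X,t}$ (a single truncation would not balance these scales). I expect this product-rule step to be the main obstacle; the path-coordinate analysis is essentially routine.
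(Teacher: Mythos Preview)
Your argument is correct. For the path coordinates it is essentially the paper's, reorganized; one small slip is that $v\mapsto 2d_G(v,V\setminus X)$ is $2$-Lipschitz rather than $1$-Lipschitz, but this does not affect the $O(1)$ conclusion. For the root coordinates your route is genuinely cleaner than the paper's. Both reduce to a product-type bound and then to showing $|\phi_v-\phi_u|=O(d_G(u,v)/2^t)$ (the paper writes the weights as $p_t,q_t$). The paper establishes this via a three-way case split on whether $t_u=t_v$, $t_u=t_v-1$, or $t_u<t_v-1$, computing $|\lambda_v-\lambda_u|$, $|1-\lambda_v-\lambda_u|$, $1-\lambda_u$, etc., separately in each case. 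Your observation that for fixed $t$ the weight $\phi_v$ is a continuous piecewise-linear tent in $d_G(v,V\setminus X)$, supported on $[2^{t-2},2^t)$ with maximum slope $2^{2-t}$, yields the same global Lipschitz bound in one stroke and makes the scale cancellation $2^t\cdot 2^{2-t}=O(1)$ transparent. The paper's case split gives marginally sharper constants in individual regimes, but nothing needed for the lemma.
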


\begin{restatable}[Contraction Bound]{lemma}{ConLem}
	\label{lem:contraction}
	For any  vertices $u,v$, there exists some coordinate $j$ such that
	\[ \mathbb{E}_{\alpha,\beta}\left[\left|f_j(v)-f_j(u)\right|\right]=\Omega(d_G(u,v)). \]
\end{restatable}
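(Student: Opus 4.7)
Fix vertices $u,v$; write $D := d_G(u,v)$ and let $c > 0$ be a small absolute constant to be fixed below. The plan is to produce a single coordinate of one map $f_{X^*}$ whose expected gap between $u$ and $v$ is $\Omega(D)$. Let $i^*$ be the smallest level such that $u,v$ both lie in a common $X^* \in \mathcal{X}_{i^*}$ and one of the following critical events occurs for $P^* := P_{X^*}$: (I) $P^*$ separates $u$ from $v$ in $G[X^*]$, or contains one of them; or (II) $\min\{d_{X^*}(u,P^*),\, d_{X^*}(v,P^*)\} \le c\cdot D$. Such an $i^*$ exists because if $u,v$ never split, they eventually coexist on a single path at the leaf level, which triggers (I).

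\textbf{Truncation is harmless at $i^*$.} At every earlier level $j < i^*$, by minimality, $P^{(j)}$ lies at distance exceeding $c\cdot D$ from both $u$ and $v$. The set $V\setminus X^*$ is built from those paths $P^{(1)},\dots,P^{(i^*-1)}$ together with the sibling components peeled off along the way, and any route in $G$ from $u$ or $v$ to such a vertex must traverse some $P^{(j)}$; hence
\[
\rho^*(u),\ \rho^*(v) \;:=\; d_G\bigl(\cdot,\, V\setminus X^*\bigr) \;>\; c\cdot D.
\]
Also $|\rho^*(u) - \rho^*(v)| \le D$ by $1$-Lipschitz-ness in the point, which yields $2^{t_u},\,2^{t_v} = \Theta(D)$ and $|t_u - t_v| \le 1$.

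\textbf{Case analysis at level $i^*$.} \emph{Case A:} event (I) holds and, WLOG, $d_{X^*}(v, P^*) \ge c' \cdot D$ for a suitable constant $c' > 0$. Let $C_v$ denote the component of $G[X^* \setminus P^*]$ containing $v$. Then $u \notin C_v$, so $(f_{X^*}^{\fpath})_{C_v}(u) = 0$, while $\rho^*(v) = \Omega(D)$ ensures
\[
(f_{X^*}^{\fpath})_{C_v}(v) \;=\; \min\{d_{X^*}(v, P^*),\, 2\rho^*(v)\} \;=\; \Omega(D),
\]
giving deterministic contribution $\Omega(D)$ on this coordinate. \emph{Case B:} otherwise $d_{X^*}(u, P^*),\, d_{X^*}(v, P^*) = O(c' \cdot D)$. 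A triangle-inequality computation using $d_G(u,v) \le d_{X^*}(u,v)$ and the fact that $P^*$ is a geodesic of $G[X^*]$ ending at $r^*$ gives
\[
|d_{X^*}(u, r^*) - d_{X^*}(v, r^*)| \;\ge\; (1 - O(c'))\,D .
\]
Pick a common scale $t \in \{t_u, t_u+1\} \cap \{t_v, t_v+1\}$, which exists since $|t_u - t_v| \le 1$; by the previous paragraph, $2^t = \Theta(D)$. In the generic sub-case where both weights ($\lambda$ or $1-\lambda$) at scale $t$ are $\Theta(1)$, Property~\ref{Pr:SmallandLargeGap} of the Sawtooth Lemma yields
\[
\mathbb{E}_{\alpha,\beta}\bigl[|f^{\froot}_{X^*, t}(u) - f^{\froot}_{X^*, t}(v)|\bigr] \;=\; \Omega\bigl(\min\{|d_{X^*}(u,r^*)-d_{X^*}(v,r^*)|,\,2^t\}\bigr) \;=\; \Omega(D).
\]
In the degenerate sub-case where one of the two weights is negligible, the other must be $\Theta(1)$ (weights at a given point sum to $1$ across its two scales), and Property~\ref{Pr:Avg} applied to the surviving term gives expected value $2^{t-1} = \Omega(D)$.

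\textbf{Main obstacle.} The delicate step is Case B: the periodic sawtooth $g_t$ could, for a deterministic $(\alpha,\beta)$, collapse $g_t(d_{X^*}(u,r^*))$ and $g_t(d_{X^*}(v,r^*))$ to the same value despite the $\Theta(D)$ gap in their arguments, and simply truncating the distance-to-root (as we did for the distance-to-path coordinate) would also lose this gap entirely when $r^*$ is far from $\{u,v\}$. The random shift $\alpha$ and stretch $\beta$ in the Sawtooth Lemma are precisely the mechanism that defeats this coincidence in expectation, simultaneously for all pairs. A secondary subtlety, handled by the convex combination across adjacent scales $\{t_w, t_w+1\}$ with weights $\lambda_w,\,1-\lambda_w$, is that $u$ and $v$ may live at slightly different scales; verifying that a common usable scale always exists is where the $\rho^* > c\cdot D$ bound of paragraph two is essential.
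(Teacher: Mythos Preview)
Your overall architecture---choose the first level $i^*$ at which the path either separates $u,v$ or comes within $cD$ of one of them, use $f^{\fpath}$ when one vertex stays far from the path, and fall back to $f^{\froot}$ when both are close---is exactly the paper's. But your execution of Case~B (the $f^{\froot}$ case) has a genuine gap, and there are some smaller issues elsewhere.

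\medskip
\textbf{Minor sloppiness.} First, the claim $2^{t_u},2^{t_v}=\Theta(D)$ is false: from $\rho^*>cD$ you only get $2^{t_u},2^{t_v}=\Omega(D)$, with no upper bound (at the top level $X^*=V$ the boundary distance is infinite). Only the lower bound is ever needed, so this is harmless. Second, your case split is not exhaustive: Case~A requires event~(I), but you can have (II) without (I) and still have one of $u,v$ far from $P^*$ (say $d_{X^*}(v,P^*)\le cD$ while $d_{X^*}(u,P^*)$ is large, yet $u,v$ lie in the same component of $X^*\setminus P^*$). This case is easily dispatched with the $f^{\fpath}$ coordinate of $u$'s component, as the paper does, but you have not covered it.

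\medskip
\textbf{The real gap: the ``common scale'' argument.} You assert $|t_u-t_v|\le 1$ from $|\rho^*(u)-\rho^*(v)|\le D$. With $c$ small this does not follow: if $\rho^*(u)\approx cD$ and $\rho^*(v)\approx cD+D$, the ratio $\rho^*(v)/\rho^*(u)\approx 1+1/c$ can span several dyadic scales, so $\{t_u,t_u{+}1\}\cap\{t_v,t_v{+}1\}$ may be empty. Even when a common scale $t$ exists (say $t_v=t_u+1$, so $t=t_v$ is the unique common scale), both weights $p_t=1-\lambda_v$ and $q_t=\lambda_u$ can be simultaneously negligible (take $\lambda_v\to 1$, $\lambda_u\to 0$; this is compatible with $|\rho^*(u)-\rho^*(v)|\le D$ once $c<1/3$). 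Your ``degenerate sub-case'' only treats the situation where one weight is small and the other is $\Theta(1)$; the parenthetical that a single point's two weights sum to~$1$ says nothing about the relationship between $p_t$ and $q_t$ at the shared scale.

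The paper sidesteps this entirely by \emph{not} insisting on a common scale. Assuming $t_v\ge t_u$, it picks $t\in\{t_v,t_v{+}1\}$ with $p_t\ge\tfrac12$ (this always exists) and then splits on the size of $|p_t-q_t|\cdot 2^t$. If it exceeds $D/S$ for a suitable constant $S$, Jensen together with Property~\ref{Pr:Avg} gives
\[
\mathbb{E}\bigl[\,|p_t\,g_{t,\alpha,\beta}(x)-q_t\,g_{t,\alpha,\beta}(y)|\,\bigr]\;\ge\;|p_t-q_t|\cdot 2^{t-1}\;=\;\Omega(D).
\]
Otherwise $q_t\ge p_t-\tfrac{D}{S\cdot 2^t}\ge \tfrac12-\tfrac{c}{S}\ge\tfrac14$, and now Property~\ref{Pr:SmallandLargeGap} applies to $\min\{p_t,q_t\}\cdot|g(x)-g(y)|$ with an $O(D/S)$ correction. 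This dichotomy on $|p_t-q_t|\cdot 2^t$ is the missing idea in your Case~B; once you adopt it, the need for a common scale disappears and the remaining constants line up.
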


Given these two lemmas, we can combine them together to show that the
entire embedding has small distortion. (A proof of the composition lemma
can be found in \Cref{app:ProofOfProbEmbed}.)
\begin{restatable}[Composition Lemma]{lemma}{ProbEmbed}\label{lem:ProbEmbed}
	Let $(X,d)$ be a metric space. Suppose that there are parameters
	$\rho,\tau$ and a function $f:X\rightarrow \R^s$, drawn from
	some probability space such that:
	\begin{enumerate}
		\item For every $u,v\in X$ and every $j\in[s]$,
		$|f_j(v)-f_j(u)|\le \rho\cdot d(v,u)$.
		\label{Prop:expansion}
		\item For every $u,v\in X$, there exists $j\in[s]$ such that
		$\mathbb{E}[\,|f_j(v)-f_j(u)|\,]\ge \frac{1}{\tau}\cdot
		d(v,u)$.\label{Prop:contraction}
		\item For every $v\in X$, $\|f(x)\|_0\le k$, that is $f(v)$ has support of size at most $k$
		(formally, there is a subset of indices $I_v\subseteq
		[s]$ of size $\leq k$, such that $\forall j\notin I_{v}$, $f_j(v)=0$).
		\label{Prop:arity}
	\end{enumerate}
	Then, for every $p\ge 1$, there is an embedding of $(X,d)$ into
	$\ell_p$ with distortion
	$O(\rho^{3}\tau^{2})\cdot k^{\nicefrac{1}{p}}$. In particular, if $\rho$ and $\tau$ are constant, then the distortion is constant.
	Moreover, if there is an efficient algorithm for sampling such an $f$,
	then there is a randomized algorithm that constructs the embedding
	efficiently (in expectation).
\end{restatable}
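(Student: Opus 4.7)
The plan is to construct the target embedding as a normalized concatenation of $N=\Theta(\log|X|)$ independent copies of the given random map $f$. Writing $f^{(1)},\dots,f^{(N)}$ for the samples, set
\[ F \;:=\; N^{-1/p}\bigl(f^{(1)}\oplus\cdots\oplus f^{(N)}\bigr), \]
so that $\|F(u)-F(v)\|_p^p = \tfrac{1}{N}\sum_{i=1}^{N}\|f^{(i)}(u)-f^{(i)}(v)\|_p^p$. I would then bound this quantity above and below separately.

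The expansion bound is handled deterministically. For any $u,v\in X$ and any copy $f^{(i)}$, the support of $f^{(i)}(u)-f^{(i)}(v)$ is contained in $I^{(i)}_u\cup I^{(i)}_v$, so it has at most $2k$ nonzero coordinates by the support hypothesis; each coordinate is bounded in absolute value by $\rho\cdot d(u,v)$ by the per-coordinate expansion hypothesis. Hence $\|f^{(i)}(u)-f^{(i)}(v)\|_p^p\le 2k\rho^p d(u,v)^p$ for every $i$, and averaging over $i$ yields $\|F(u)-F(v)\|_p\le (2k)^{1/p}\rho\cdot d(u,v)$ deterministically.

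The contraction bound requires a probabilistic argument per pair, followed by a union bound. Fix $u,v\in X$ and let $j^{\ast}=j^{\ast}(u,v)$ be a coordinate witnessing the expected-contraction hypothesis, and set $Z:=|f_{j^{\ast}}(u)-f_{j^{\ast}}(v)|\in[0,\rho\cdot d(u,v)]$ with $\E[Z]\ge d(u,v)/\tau$. Since $x\mapsto x^p$ is convex for $p\ge 1$, Jensen's inequality gives $\E[Z^p]\ge (\E[Z])^p\ge (d(u,v)/\tau)^p$. The random variables $Z_i^p:=|f^{(i)}_{j^{\ast}}(u)-f^{(i)}_{j^{\ast}}(v)|^p$ are i.i.d., bounded in $[0,\rho^p d(u,v)^p]$, and have mean at least $(d(u,v)/\tau)^p$; a Hoeffding bound then yields $\tfrac{1}{N}\sum_i Z_i^p \ge \tfrac12(d(u,v)/\tau)^p$ except with probability at most $1/(3|X|^2)$, provided $N=\Theta(\log|X|)$ with a constant depending on $\rho,\tau,p$. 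A union bound over the $\binom{|X|}{2}$ pairs gives a single realization in which this inequality holds for every pair simultaneously; restricting $\|F(u)-F(v)\|_p^p$ to only the $j^{\ast}$-contribution across the $N$ blocks yields $\|F(u)-F(v)\|_p\ge \Omega(d(u,v)/\tau)$.

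Combining the two estimates gives distortion $O(k^{1/p}\rho\tau)=O(k^{1/p})$, absorbing the constants $\rho$ and $\tau$. The main subtlety I expect is the Jensen step, which is what converts the $\ell_1$-type expected-contraction hypothesis into the $\ell_p^p$-type quantity on which Hoeffding concentration can be applied; the rest is a routine concatenation-plus-concentration argument. The same random sampling, together with a simple verification of the distortion on each pair, provides the efficient randomized algorithm claimed in the ``moreover'' clause.
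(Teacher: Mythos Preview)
Your argument is correct and follows the same overall scheme as the paper: concatenate $\Theta(\log|X|)$ i.i.d.\ copies, bound the expansion deterministically via the $2k$-sparse support, and prove contraction by a concentration inequality on the witnessing coordinate followed by a union bound over pairs. The only technical difference is in the concentration step: the paper first thresholds to a Bernoulli event $\{|f_j(v)-f_j(u)|\ge \Delta/2\tau\}$, shows it has probability $\ge 1/(2\rho\tau)$, and applies Chernoff to the count---this makes the number of samples $m=48\rho\tau\ln n$ independent of $p$, whereas your Jensen-then-Hoeffding route needs $N=\Theta((\rho\tau)^{2p}\log|X|)$ samples, which is harmless for the distortion claim but worth noting.
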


\Cref{thm:main}, our embedding for graphs with low depth \SPDs,
immediately follows by applying the Composition Lemma
(\Cref{lem:ProbEmbed}) to \Cref{lem:expansion},
\Cref{lem:contraction}, and \Cref{obv:coordinates}.

\subsection{Bounding the Expansion: Proof of \Cref{lem:expansion}}\label{Appendix:expansion}

In this section we bound the expansion of any coordinate in our
embedding. Recall that the embedding of $v$ lying in some component $X$
consists of two sets of coordinates: its distance from the path, and its
distance from the root. As mentioned in the introduction, since points
outside $X$ are mapped to zero, maintaining Lipschitz-ness requires us
to truncate the contribution of $v$ of any coordinate to its distance
from the boundary. This truncation (either via taking a minimum with
$d_G(v, V \setminus X)$, or via the sawtooth function), means that our
proofs of expansion require more care. Let us now give the details.

Consider any level $i$, any set $X\in\mathcal{X}_i$, and any pair of
vertices $u,v$. It suffices to show that $\Vert
f_{X}(v)-f_{X}(u)\Vert_{\infty} = O(d_G(u,v))$. To begin, we may assume
that both $u,v \in X$. Indeed, if both $u,v\notin X$, then
$f_{X}(v)=f_{X}(u)=\vec{0}$ and we are done. If one of them, say $v$
belongs to $X$ while the other $u\notin X$, then $f_{X}(u)=\vec{0}$
while $f_{X}(v)$ is bounded by $2^{t_{v}+1}\le 4 d_{G}\left(v,V\setminus
X\right) \le 4\, d_G(u,v)$ in each coordinate.

Moreover, we may also assume that the shortest $u$-$v$ path in $G$
contains only vertices from $X$. Indeed, suppose their shortest path in
$G$ uses vertices from $V\setminus X$, then $d_{G}(u,V\setminus
X)+d_{G}(v,V\setminus X)\le d_{G}(u,v)$. But since both
$f_{X}(v),f_{X}(u)$ are bounded in each coordinate by $4\cdot\max\left\{
d_{G}(u,V\setminus X),d_{G}(v,V\setminus X)\right\}$, we
have constant expansion. Henceforth, we can assume that
$d_{G}(u,v)=d_{X}(u,v)$. We now bound the expansion in each of the two
parts of $f_{X}$ separately.

\medskip \textbf{Expansion of ${f}^{\fpath}_{X}$.}  Let $X_{v},X_{u}$ be
the connected components in $G\left[X\setminus P_{X}\right]$ such that
$v\in X_{v}$ and $u\in X_{u}$.  Consider the first case $X_{v}\ne
X_{u}$, then $P_{X}$ intersects the shortest path between $v$ and
$u$. In particular,
\begin{align*}
\Vert{f}_{X}^{\fpath}(v)-{f}_{X}^{\fpath}(u)\Vert_{\infty} & \le\,\,\min\left\{ d_{X}\left(v,P_{X}\right),2d_{G}\left(v,V\setminus X\right)\right\} +\min\left\{ d_{X}\left(u,P_{X}\right),2d_{G}\left(u,V\setminus X\right)\right\} \\
& \le\,\,d_{X}\left(v,P_{X}\right)+d_{X}\left(u,P_{X}\right)\le d_{X}(v,u)~=~d_{G}(v,u)~.
\end{align*}
Otherwise, $X_{v}=X_{u}$ and the two vertices lie in the same
component. Now
$\Vert{f}^{\fpath}_{X}(v)-{f}^{\fpath}_{X}(u)\Vert_{\infty}$ equals
$
\big|\min\left\{ d_{X}\left(v,P_{X}\right),2 d_{G}\left(v,V\setminus
X\right)\right\} -\min\left\{ d_{X}\left(u,P_{X}\right),2
d_{G}\left(u,V\setminus X\right)\right\}\big|
$.
Assuming (without loss of generality) that the first term is at least the second, we can drop the absolute
value signs. Now the bound on the expansion follows from a simple case
analysis. Indeed, suppose $d_{X}\left(u,P_{X}\right) \leq 2
d_{G}\left(u,V\setminus X\right)$. Then we get
\begin{align*}
\Vert{f}_{X}^{\fpath}(v)-{f}_{X}^{\fpath}(u)\Vert_{\infty} & =\,\,\min\left\{ d_{X}\left(v,P_{X}\right),2d_{G}\left(v,V\setminus X\right)\right\} -d_{X}\left(u,P_{X}\right)\\
& \leq\,\,d_{X}\left(v,P_{X}\right)-d_{X}\left(u,P_{X}\right)~\leq~ d_{X}(u,v)~=~d_{G}(u,v).
\end{align*}
The other case is that $d_{X}\left(u,P_{X}\right) > 2 d_{G}\left(u,V\setminus
X\right)$, and then
\begin{align*}
\Vert{f}_{X}^{\fpath}(v)-{f}_{X}^{\fpath}(u)\Vert_{\infty} & =\,\,\min\left\{ d_{X}\left(v,P_{X}\right),2d_{G}\left(v,V\setminus X\right)\right\} -2d_{G}\left(u,V\setminus X\right)\\
& \leq\,\,2d_{G}\left(v,V\setminus X\right)-2d_{G}\left(u,V\setminus X\right)~\leq~2d_{G}(u,v).
\end{align*}
Hence the expansion is bounded by $2$.

\medskip \textbf{Expansion of ${f}^{\froot}_{X}$.} Let $r$ be the root
of $P_{X}$. For $t\in\{0,1,\dots,M\}$, let $p_{t}$ (respectively,
$q_{t}$) be the ``weight'' of $v$ (respectively, $u$) on $g_{t, \alpha, \beta}$---in
other words, $p_{t}$ is the constant in~(\ref{eq:ftilde}) such that
${f}^{\froot}_{X,t}(v)=p_{t}\cdot g_{t, \alpha, \beta}(d_X(v,r))$. Note that $p_{t} \in \{0, \lambda_v, 1 -
\lambda_v\}$ is chosen deterministically, and is nonzero for at most
two indices $t$.

First, observe that for every~$t$,
\begin{align}
\left|{f}_{X,t}^{\froot}(v)-{f}_{X,t}^{\froot}(u)\right|\notag
& =\left|p_{t}\cdot g_{t,\alpha,\beta}(d_{X}(v,r))-q_{t}\cdot g_{t,\alpha,\beta}(d_{X}(u,r))\right|\notag\\
& \le\min\left\{ p_{t},q_{t}\right\} \cdot\left| g_{t,\alpha,\beta}(d_{X}(v,r))-g_{t,\alpha,\beta}(d_{X}(u,r))\right|+\left|p_{t}-q_{t}\right|\cdot2^{t}\notag\\
& \le\min\left\{ p_{t},q_{t}\right\} \cdot\beta\cdot\left|d_{X}(v,r)-d_{X}(u,r)\right|+\left|p_{t}-q_{t}\right|\cdot2^{t}\notag\\
& \le O(d_{G}(u,v))+\left|p_{t}-q_{t}\right|\cdot2^{t}~.\label{eq:2}
\end{align}

The first inequality used that $g_{t, \alpha, \beta}$ is bounded by $2^t$, and the second
inequality that $g_{t, \alpha, \beta}$ is $\beta$-Lipschitz; both follow from
\obsref{Obs:Sawtooth}. The last inequality follows by the triangle
inequality (since we assumed that the shortest path from $v$ to $u$
is contained within $X$).

Hence, it suffices to show that
$\left|p_{t}-q_{t}\right|=O(d_G(u,v)/2^{t})$. Indeed, for indices
$t\notin\left\{ t_{u},t_{u}+1,t_{v},t_{v}+1\right\} $, $p_{t}=q_{t}=0$,
hence $\left|p_{t}-q_{t}\right|=0$. Let us consider the other cases.
W.l.o.g., assume that $d_{G}\left(v,V\setminus X\right)\ge
d_{G}\left(u,V\setminus X\right)$ and hence $t_{v}\ge t_{u}$.
\begin{itemize}
	\item $\boldsymbol{t_{u}=t_{v}:}$ In this case,
	$\left|p_{t_{v}}-q_{t_{v}}\right|=\left|(1-\lambda_{v})-(1-\lambda_{u})\right|
	= \lambda_{v}-\lambda_{u} =
	\left|p_{t_{v}+1}-q_{t_{v}+1}\right|$. Moreover, this quantity is
	\begin{align*}
	\lambda_{v}-\lambda_{u} & =~\frac{2d_{G}\left(v,V\setminus X\right)-2^{t}}{2^{t}}-\frac{2d_{G}\left(u,V\setminus X\right)-2^{t}}{2^{t}}\\
	& =~\frac{2\left(d_{G}\left(v,V\setminus X\right)-d_{G}\left(u,V\setminus X\right)\right)}{2^{t}}\\
	& \le~\frac{2d_{G}(u,v)}{2^{t}}~.
	\end{align*}
	Hence, we get that $\left|p_{t}-q_{t}\right|=O(d_G(u,v)/2^t)$ for
	both $t \in \{t_v, t_v+1\}$.
	
	\item $\boldsymbol{t_{u}=t_{v}-1}:$ It holds that	
	\begin{align*}
	\lambda_{v}+(1-\lambda_{u}) & \le\,\,2\cdot\frac{2d_{G}\left(v,V\setminus X\right)-2^{t_{v}}}{2^{t_{v}}}+\frac{2^{t_{u}+1}-2d_{G}\left(u,V\setminus X\right)}{2^{t_{u}}}\\
	& =\,\,\frac{2d_{G}\left(v,V\setminus X\right)-2d_{G}\left(u,V\setminus X\right)}{2^{t_{u}}}~\le~\frac{2d_{G}(u,v)}{2^{t_{u}}}~.
	\end{align*}
	If we define $\chi := \lambda_{v}+(1-\lambda_{u})$, we conclude that
	\begin{align*}
	\left|p_{t_{v}+1}-q_{t_{v}+1}\right|\qquad=\qquad\lambda_{v} & \,\,\le\,\,\chi\hspace{5pt}=\hspace{5pt}O(d_{G}(u,v)/2^{t_{v}+1})\\
	\left|p_{t_{v}}-q_{t_{v}}\right|\quad=\hspace{5pt}\left|1-\lambda_{v}-\lambda_{u}\right| & \,\,\le\,\,\chi\hspace{5pt}=\hspace{5pt}O(d_{G}(u,v)/2^{t_{v}})\\
	\left|p_{t_{u}}-q_{t_{u}}\right|\quad\quad=\hspace{11pt}\quad1-\lambda_{u} & \,\,\le\,\,\chi\hspace{5pt}=\hspace{5pt}O(d_{G}(u,v)/2^{t_{u}})
	\end{align*}
	
	\item $\boldsymbol{t_{u}<t_{v}-1}:$ By the definition of $t_v$ and
	$t_u$,
	$$2 d_{G}(v,u)\ge 2\left(d_{G}(v,V\setminus X)-d_{G}(u,V\setminus
	X)\right)\ge2^{t_{v}}-2^{t_{u}+1}\ge2^{t_{v}-1}~.$$ 
	In particular, 	for every $t\le t_v+1$,
	$\left|p_{t}-q_{t}\right|\le1\le\frac{2d_G(u,v)}{2^{t_{v}-1}} =
	O\left(\frac{d_G(u,v)}{2^{t}}\right)$.
\end{itemize}

\subsection{Bounding the Contraction: Proof of
	\Cref{lem:contraction}}
\label{sec:Con}

\newcommand{\distuv}{\Delta_{uv}}

Recall that we want to prove that for any pair $u,v$ of vertices, the
embedding has a large contribution between them.  A natural proof idea
is to show that vertices $u, v$ would eventually be separated by the
recursive procedure. When they are separated, either one of $u,v$ is far
from the separating path $P$, or they both lie close to the path. In the
former case, the distance $d(v,P)$ gives a large contribution to the
embedding distance, and in the latter case the distance from one end of
the path (the ``root'') gives a large contribution.

However, there's a catch: the value of $v$'s embedding in any single
coordinate cannot be more than $v$'s distance to the boundary, and this
causes problems. Indeed, if $u,v$ fall very close to the path $P$ at
some step of the algorithm, they must get most of their contribution at
this level, since future levels will not give much contribution. How can
we do it, without assigning large values?  This is where we use the
sawtooth function: it gives a good contribution between points without
assigning any vertex too large a value in any coordinate.

Formally, to bound the contraction and prove \Cref{lem:contraction},
for nodes $u, v$ we need to show that there exists a coordinate $j$ such
that $\E_{\alpha, \beta} [ |f_j(v)-f_j(u) | ]=\Omega(d_G(u,v))$.  For
brevity, define
\begin{gather}
\distuv := d_G(u,v).
\end{gather}

Fix $c=12$. Let $i$ be the minimal index such that there exists
$X\in\mathcal{X}_i$ with $u,v\in X$, and at least one of the following
holds:
\begin{enumerate}
	\item \label{item:pathClose} $\min\left\{ d_{X}(v,P_{X}),d_{X}(u,P_{X})\right\} \le\distuv/c$
	(i.e., we choose a path close to $\left\{ v,u\right\} $).
	\item \label{item:separate} $v$ and $u$ are in different components of $X\setminus P_{X}$.
\end{enumerate}
Note that such an index $i$ indeed exists: if $v$ and $u$ are separated by the \SPD then condition \conref{item:separate} holds. The only other possibility that $v$ and $u$ are never separated is when at least one of them lies on one of the shortest paths. In such a case, surely \conref{item:pathClose} holds.
By the
minimality of $i$, for every $X'\in \mathcal{X}_{i'}$ such that $i'<i$
and $u,v\in X'$, necessarily
$\min\{d_{X'}(v,P_{X'}),d_{X'}(u,P_{X'})\}>\distuv/c$. Therefore, the
ball with radius $\distuv/c$ around each of $v,u$ is contained in $X$. In
particular, $\min\left\{ d_{G}(v,V\setminus X),d_{G}(u,V\setminus
X)\right\} >\distuv/c$.

Suppose first that  \conref{item:separate} occurs but not \conref{item:pathClose}. Let $j$ be the coordinate in
${f}^{\fpath}_{X}$ created for the connected component of $v$ in
$X\setminus P_{X}$. Then
\begin{align*}
 \left|({f}_{X}^{\fpath})_{j}(v)-({f}_{X}^{\fpath})_{j}(u)\right|
& \quad=\quad\min\left\{ d_{X}\left(v,P_{X}\right),2d_{G}\left(v,V\setminus X\right)\right\} -0\\
& \quad\ge\quad\min\left\{ \frac{\distuv}{c},2\frac{\distuv}{c}\right\} ~=~\frac{\distuv}{c}~.
\end{align*}
Next assume that \conref{item:pathClose} occurs. W.l.o.g., assume that $d_{X}(v,P_{X})\le
d_{X}(u,P_{X})$, so that $d_X(v,P_X)\le\distuv/c$.  Suppose first that $d_{X}(u,P_{X})\ge2\distuv/c$. Then in the
coordinate $j$ in ${f}^{\fpath}_{X}$ created for the connected component
of $u$ in $X\setminus P_{X}$, we have
\begin{align*}
\left|({f}_{X}^{\fpath})_{j}(v)-({f}_{X}^{\fpath})_{j}(u)\right| & \ge\left|\min\left\{ d_{X}(u,P_{X}),2d_{G}(u,V\setminus X)\right\} -\min\left\{ d_{X}(v,P_{X}),2d_{G}(v,V\setminus X)\right\} \right|\\
& \ge\min\left\{ 2\frac{\distuv}{c},2\frac{\distuv}{c}\right\} -\frac{\distuv}{c}~=~\frac{\distuv}{c}~.
\end{align*}
(It does not matter whether $v$, $u$ are in the same connected component
or not.) Thus it remains to consider the case
$d_{X}(u,P_{X})<2\distuv/c$. Let $r$ be the root of $P_{X}$. Let $v'$
(resp. $u'$) be the closest vertex on $P_{X}$ to $v$ (resp. $u$) in
$G[X]$. Then by the triangle inequality
\begin{align*}
d_{X}(v',u') & \ge
d_{X}(v,u)-d_{X}(v,v')-d_{X}(u,u')\ge\frac{c-3}{c}\distuv~.
\end{align*}
In particular,
\begin{align}
\left|d_{X}(v,r)-d_{X}(u,r)\right| & \ge\,\,\left|d_{X}(v',r)-d_{X}(u',r)\right|-d_{X}(v,v')-d_{X}(u,u')\notag\\
& \ge\,\,\frac{c-6}{c}\distuv\,\,=\,\,\frac{1}{2}\distuv~,\label{inq:xyDist}
\end{align}
where we used that $P_{X}$ is a shortest path in $G[X]$
(implying $\left|d_{X}(v',r)-d_{X}(u',r)\right|=d_{X}(v',u')$).
See~\Cref{fig:figRoutevu} for illustration.

\begin{figure}[h]
	\centering{\includegraphics[scale=1]{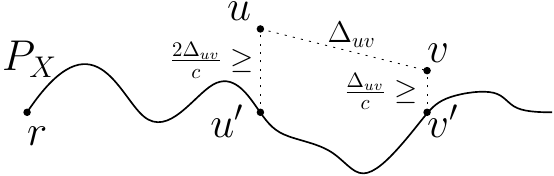}}
	\caption{\label{fig:figRoutevu}\small$P_X$ is a shortest path with
		root $r$. $v$ (resp. $u$) is at distance at most
		$\frac{\distuv}{c}$ (resp. $\frac{2\distuv}{c}$) from $v'$ (resp
		$u'$), it's closest vertex on $P_X$. By triangle inequality
		$d_X(v',u')\ge (1-\frac{3}{c})\distuv$. As $u',v'$ lay on the same
		shortest path starting at $r$,
		$|d_X(v',r)-d_X(u',r)|=d_X(v',u')$. Using the triangle inequality
		again we conclude
		 $\left|d_{X}(v,r)-d_{X}(u,r)\right|\ge\left|d_{X}(v',r)-d_{X}(u',r)\right|-\frac{3}{c}\distuv\ge(1-\frac{6}{c})\distuv$. }
\end{figure}

Set $x=d_{X}(v,r)$ and $y=d_{X}(u,r)$.  Assume first that
$d_{G}\left(v,V\setminus X\right)\ge d_{G}\left(u,V\setminus
X\right)$. In particular, $t_v\ge t_u$.  By the definition of $t_{v}$,
$2 d_{G}\left(v,V\setminus X\right)\le2^{t_{v}+1}$.  Thus
\begin{equation}\label{eq:2tv}
2^{t_v}\ge\frac{\distuv}{c}=\Omega(\distuv)
\end{equation}

\begin{claim}
	\label{clm:ExpLowerBound}
	Let $t\ge t_v$, then there is a constant $\phi$ such that
	\[ \mathbb{E}_{\alpha,\beta}\left[\left|g_{t, \alpha, \beta}(
	x)-g_{t, \alpha, \beta}(
	y)\right|\right]\ge \distuv/\phi. \]
\end{claim}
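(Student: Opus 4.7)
The plan is to apply the Sawtooth Lemma (Lemma~\ref{lem:BrkIdn}, property~\ref{Pr:SmallandLargeGap}) directly to the values $x=d_X(v,r)$ and $y=d_X(u,r)$, which immediately yields
\[
\mathbb{E}_{\alpha,\beta}\bigl[\,|g_{t,\alpha,\beta}(x)-g_{t,\alpha,\beta}(y)|\,\bigr]=\Omega\bigl(\min\{|x-y|,\,2^{t}\}\bigr).
\]
So the entire task reduces to lower-bounding both arguments of the minimum by a constant multiple of $\distuv$.

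For the first argument, I would invoke inequality~(\ref{inq:xyDist}) proved just above the claim, which already gives $|x-y|\ge\tfrac{1}{2}\distuv$; this bound is precisely what the whole geometric argument about the closest points $v',u'$ on $P_X$ was engineered to provide, so no additional work is needed here. For the second argument, I would use~(\ref{eq:2tv}), namely $2^{t_v}\ge\distuv/c$, together with the hypothesis $t\ge t_v$ of the claim, to conclude that $2^t\ge 2^{t_v}\ge\distuv/c$.

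Combining the two lower bounds (using the elementary fact that $\min\{a,b\}\ge\min\{A,B\}$ whenever $a\ge A$ and $b\ge B$), and recalling that $c=12$, we obtain
\[
\min\{|x-y|,\,2^{t}\}\;\ge\;\min\!\left\{\tfrac{1}{2}\distuv,\;\tfrac{1}{c}\distuv\right\}\;=\;\tfrac{1}{c}\distuv\,.
\]
Plugging this into the Sawtooth Lemma bound yields $\mathbb{E}_{\alpha,\beta}[|g_{t,\alpha,\beta}(x)-g_{t,\alpha,\beta}(y)|]\ge \distuv/\phi$ for an absolute constant $\phi$ (depending only on $c$ and on the implicit constant hidden in the $\Omega$ of Lemma~\ref{lem:BrkIdn}).

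There is no real obstacle here: all the heavy lifting has already been done earlier, both the probabilistic content (Lemma~\ref{lem:BrkIdn}, whose proof is deferred to the appendix and is what required the random shift $\alpha$ and random stretch $\beta$) and the geometric content (the derivation of~(\ref{inq:xyDist}) and~(\ref{eq:2tv}) from the minimality of the level $i$ and the choice of $t_v$). The claim is essentially a bookkeeping step that packages these two ingredients in the form needed for the subsequent contraction analysis, where this lower bound on $|g_{t,\alpha,\beta}(x)-g_{t,\alpha,\beta}(y)|$ will be combined with the weights $\lambda_v,\lambda_u$ in~(\ref{eq:ftilde}) to extract the required $\Omega(\distuv)$ contribution in one of the coordinates ${f}^{\froot}_{X,t}$.
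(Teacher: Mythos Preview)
Your proposal is correct and follows exactly the paper's approach: apply property~\ref{Pr:SmallandLargeGap} of Lemma~\ref{lem:BrkIdn} and then lower-bound $\min\{|x-y|,2^t\}$ using~(\ref{inq:xyDist}) and~(\ref{eq:2tv}). The paper's own proof is the same argument compressed into a single line.
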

\begin{proof}
	According to \propertyref{Pr:SmallandLargeGap} of \Cref{lem:BrkIdn}
	$$\mathbb{E}_{\alpha,\beta}\left[\,\left|g_{t, \alpha, \beta}(
	x)-g_{t, \alpha, \beta}(y)\right|\,\right]= \Omega(\min\{|x-y|,
	2^t\})\overset{(\ref{inq:xyDist})\&(\ref{eq:2tv})}{=}\Omega(\distuv)~.$$		
\end{proof}
Set $S=\max\left\{ 8\phi,\frac{8c}{2}\right\}$.
Note that
$p_{t_{v}}+p_{t_{v}+1}=(1-\lambda_{v})+\lambda_{v}=1$.  Let
$t\in\{t_{v},t_{v}+1\}$ be such that $p_{t}\ge\frac{1}{2}$.
We consider two cases:
\begin{itemize}
	\item If $\left|p_{t}-q_{t}\right|\cdot2^{t}>\frac{\distuv}{S}$, then
	\begin{align}
\mathbb{E}_{\alpha,\beta}\left[\left|{f}_{X,t}^{\froot}(v)-{f}_{X,t}^{\froot}(u)\right|\right] & =\,\,\mathbb{E}_{\alpha,\beta}\left[\left|p_{t}\cdot g_{t,\alpha,\beta}(x)-q_{t}\cdot g_{t,\alpha,\beta}(y)\right|\right]\nonumber\\
& \ge\,\,\left|p_{t}\cdot\mathbb{E}_{\alpha,\beta}\left[g_{t,\alpha,\beta}(x)\right]-q_{t}\cdot\mathbb{E}_{\alpha,\beta}\left[g_{t,\alpha,\beta}(y)\right]\right|\nonumber\\
& =\,\,\left|p_{t}-q_{t}\right|\cdot2^{t}=\Omega(\distuv)~.\label{eq:ptqtLarge}
	\end{align}
	
	Where the equality follows by \propertyref{Pr:Avg} of
	\Cref{lem:BrkIdn}.
	
	\item Otherwise, using inequality
	(\ref{eq:2tv}), $q_{t}\ge p_{t}-\frac{\distuv}{2^{t_{v}}\cdot
		S}\ge\frac{1}{2}-\frac{2c}{2\distuv}\cdot\frac{\distuv}{S}\ge\frac{1}{4}$.
	In particular,
\begin{align}
\mathbb{E}_{\alpha,\beta}\left[\left|{f}_{X,t}^{\froot}(v)-{f}_{X,t}^{\froot}(u)\right|\right] & =\,\,\mathbb{E}_{\alpha,\beta}\left[\left|p_{t}\cdot g_{t,\alpha,\beta}(x)-q_{t}\cdot g_{t,\alpha,\beta}(y)\right|\right]\nonumber\\
& \ge\,\,\min\left\{ p_{t},q_{t}\right\} \cdot\mathbb{E}_{\alpha,\beta}\Big[\Big|g_{t,\alpha,\beta}(x)-g_{t,\alpha,\beta}(y)\Big|\Big]-\left|p_{t}-q_{t}\right|\cdot2^{t}\nonumber\\
& \ge\,\,\frac{1}{4}\cdot\frac{\distuv}{\phi}-\frac{\distuv}{S}=\Omega(\distuv)~,\label{eq:ptqtSmall}
\end{align}
	where in the first inequality we used \propertyref{Pr:Avg} of
	\Cref{lem:BrkIdn}, and in the second inequality we used
	\Cref{clm:ExpLowerBound}.
\end{itemize}

Finally, recall that we assumed $d_{G}\left(v,V\setminus X\right)\ge
d_{G}\left(u,V\setminus X\right)$ for the proof above. The other case
($d_{G}\left(v,V\setminus X\right)< d_{G}\left(u,V\setminus X\right)$)
is completely symmetric.

\section{The Composition Lemma: Proof of \Cref{lem:ProbEmbed}}\label{app:ProofOfProbEmbed}
We restate the lemma for convenience:
\ProbEmbed*
\begin{proof}
	
	Fix $n=|X|$, and
	set $m=48\rho\tau\cdot \ln n$. Let
	$f^{(1)},f^{(2)},\dots,f^{(m)}:X\rightarrow\R^s$ be functions chosen i.i.d according to the given distribution.
	Set $g=m^{-\nicefrac{1}{p}}\bigoplus_{i=1}^{m}f^{(i)}$.
	We argue that with high probability, $g$ has distortion $16\rho^{3}\tau^{2}\cdot k^{\nicefrac{1}{p}}$ in $\ell_p$.
	
	Fix some pair of vertices $v,u\in V$. Set $d(v,u)=\Delta$.
	The upper bound follows from \propertyref{Prop:expansion} and \propertyref{Prop:arity} of the lemma:
	\begin{align*}
	\left\Vert g(v)-g(u)\right\Vert _{p}^{p} & =\sum_{i=1}^{m}\sum_{j\in I_{v}\cup I_{u}}\left(m^{-\nicefrac{1}{p}}\cdot\left|f^{(i)}_{j}(v)-f^{(i)}_{j}(u)\right|\right)^{p}\\
	& \le\sum_{i=1}^{m}\sum_{j\in I_{v}\cup I_{u}}\frac{1}{m}\cdot(\rho\cdot\Delta)^{p}\le 2k\cdot(\rho\cdot\Delta)^{p}~,
	\end{align*}
	thus $\left\Vert g(v)-g(u)\right\Vert _{p}\le2^{\frac{1}{p}}\cdot\rho\cdot k^{\nicefrac{1}{p}}\cdot\Delta$.
	
	Next, for the contraction bound, let $j$ be the index of \propertyref{Prop:contraction} w.r.t. $v,u$.
	Set
	$\mathcal{F}=\{f~:~\left|f_{j}(v)-f_{j}(u)\right|\ge\Delta/2\tau\}$ to be the event that we draw a function with significant contribution to $v,u$.
	Then using \propertyref{Prop:expansion} and \propertyref{Prop:contraction},
	\begin{align*}
	\frac{\Delta}{\tau} & \le\mathbb{E}\left[\left|f_{j}(v)-f_{j}(u)\right|\right]\\
	& \le\Pr\left[\overline{\mathcal{F}}\right]\cdot\frac{\Delta}{2\tau}+\Pr\left[\mathcal{F}\right]\cdot\rho\Delta\le\frac{\Delta}{2\tau}+\Pr\left[\mathcal{F}\right]\cdot\rho\Delta~,
	\end{align*}
	which implies that $\Pr\left[\mathcal{F}\right]\ge\frac{1}{2\rho \tau}$.
	Let $Q^{(i)}_{u,v}$ be an indicator random variable for the event $f^{(i)}\in\mathcal{F}$, and set $Q_{u,v}=\sum_{i=1}^m Q^{(i)}_{u,v}$. By linearity of expectation, $\mathbb{E}[Q_{u,v}]\ge \frac{m}{2\rho \tau}=24\cdot\ln n$.
	By a Chernoff bound
	\begin{align*}
	\Pr\left[Q_{u,v}\le12\cdot\ln n\right] & \le\Pr\left[Q_{u,v}\le\frac{1}{2}\cdot\mathbb{E}\left[Q_{u,v}\right]\right]\\
	& \le\exp(-\frac{1}{8}\mathbb{E}\left[Q_{u,v}\right])\\
	& \le\exp(-3\ln n)=n^{-3}~.
	\end{align*}
	By taking a union bound over the ${n\choose 2}$ pairs, with probability at least $1-\frac{1}{n}$, for every $u,v\in V$, $Q_{u,v}>12\ln n=\frac{m}{4\rho\tau}$. 
	If this event indeed occurs, then the contraction is indeed bounded:
	\begin{align*}
		\left\Vert g(v)-g(u)\right\Vert _{p}^{p} & \ge\sum_{i=1}^{m}\left(m^{-\nicefrac{1}{p}}\cdot\left|f_{j}^{(i)}(v)-f_{j}^{(i)}(u)\right|\right)^{p}\\
		& \ge\frac{1}{m}\sum_{i:Q_{u,v}^{(i)}=1}\left|f_{j}^{(i)}(v)-f_{j}^{(i)}(u)\right|^{p}\\
		& \ge\frac{Q_{u,v}}{m}\cdot\left(\frac{\Delta}{2\rho\tau}\right)^{p}=\frac{1}{4\rho\tau}\cdot\left(\frac{\Delta}{2\rho\tau}\right)^{p}~.
	\end{align*}
	In particular, for every $u,v$, $\left\Vert g(v)-g(u)\right\Vert _{p}\ge(\frac{1}{4\rho\tau})^{\frac{1}{p}}\cdot\frac{\Delta}{2\rho\tau}$.
	Combining the upper and lower bounds, we conclude that $g$ has distortion $2^{\frac{1}{p}}\cdot\rho\cdot k^{\nicefrac{1}{p}}\cdot2\rho\tau\cdot(4\rho\tau)^{\frac{1}{p}}=2^{1+\frac{3}{p}}\rho^{2+\frac{1}{p}}\tau^{1+\frac{1}{p}}\cdot k^{\nicefrac{1}{p}}$.
	
\end{proof}

\section{The Sawtooth Lemma: Proof of \Cref{lem:BrkIdn}}
\label{app:MissBkIdn}
\begin{figure}[h]
	\centering{\includegraphics[width=1.0\textwidth]{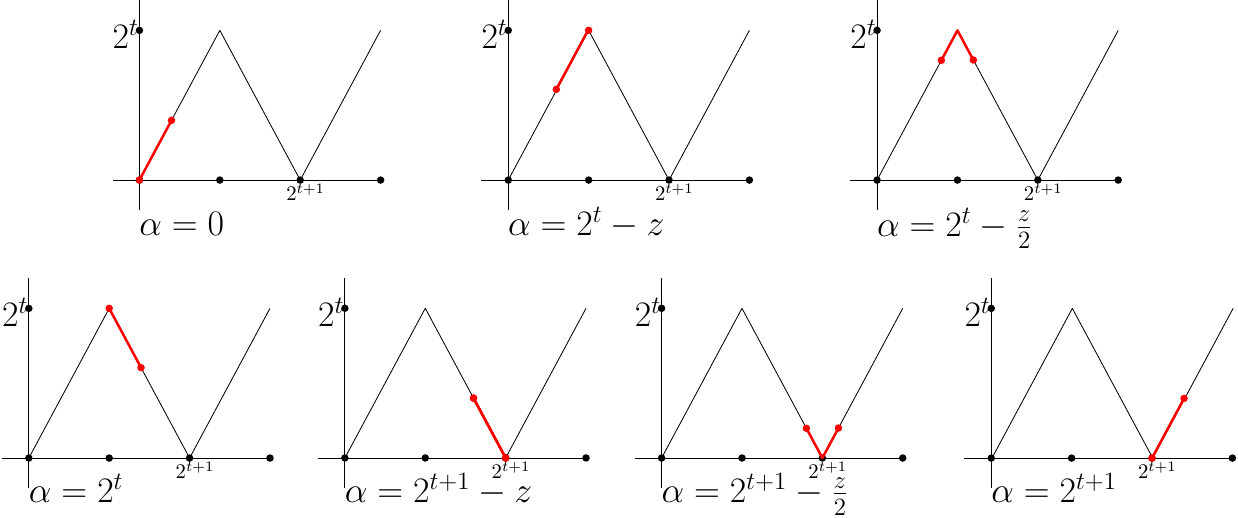}}
	\caption{\label{fig:PhaseChanges}\small{$\alpha$ is going from $0$ to $2^{t+1}$. $z\le 2^t$. In each of the figures the leftmost red point represents $\alpha$ while the rightmost red point represents $z+\alpha$. Each of the middle figures represent a moment when $g_{t}(z+\alpha)-g_{t}(z)$ changes its derivative.}}
\end{figure}

We restate \Cref{lem:BrkIdn} for convenience:
\BrkIdn*

\propertyref{Pr:Avg} is straightforward, as by \obsref{Obs:Sawtooth} $g_{t}$ is periodic with period length $2^{t+1}$. Indeed, for every fixed $\beta$, $\mathbb{E}_{\alpha}\left[g_{t, \alpha, \beta}(x)\right]=\mathbb{E}_{\alpha}\left[g_{t}(\beta x+\alpha\cdot2^{t+1})\right]=\mathbb{E}_{\alpha}\left[g_{t}(\alpha\cdot2^{t+1})\right]=2^{t-1}$.
The following claim will be useful in the proof of \propertyref{Pr:SmallandLargeGap}.
\begin{claim}
	\label{clm:BrIdnZ}For $z\in[0,2^{t+1}]$, $\mathbb{E}_{\alpha\in\left[0,1\right]}\left[\left|g_{t}(z+\alpha\cdot2^{t+1})-g_{t}(\alpha\cdot2^{t+1})\right|\right]=\frac{\left(2^{t+1}-z\right)z}{2^{t+1}}$.
\end{claim}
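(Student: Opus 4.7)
The plan is to evaluate the expectation as an integral and compute it directly by breaking the domain into pieces on which $g_t(z+\cdot) - g_t(\cdot)$ is an explicit affine function. Setting $y = \alpha\cdot 2^{t+1}$, so that $y$ is uniform on $[0, 2^{t+1}]$, the claim is equivalent to
\[
\int_0^{2^{t+1}} \bigl|g_t(y+z) - g_t(y)\bigr| \, dy = (2^{t+1} - z)\,z.
\]
First I would record the symmetry that this integral is invariant under $z \mapsto 2^{t+1} - z$: the substitution $y \mapsto y + z$ together with the $2^{t+1}$-periodicity of $g_t$ (see \obsref{Obs:Sawtooth}) interchanges the two arguments of the absolute value, so it suffices to prove the identity for $z \in [0, 2^t]$.

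With $z \in [0, 2^t]$, I would use the explicit form $g_t(y) = y$ on $[0, 2^t]$ and $g_t(y) = 2^{t+1} - y$ on $[2^t, 2^{t+1}]$ to partition $[0, 2^{t+1}]$ into the four subintervals on which both $y$ and $y+z$ stay in a fixed pair of linear pieces of $g_t$. These are exactly $[0, 2^t - z]$, $[2^t - z, 2^t]$, $[2^t, 2^{t+1} - z]$, and $[2^{t+1} - z, 2^{t+1}]$, whose endpoints are precisely the phase-change moments depicted in \Cref{fig:PhaseChanges} (the three interior endpoints record respectively $y+z$ crossing $2^t$, $y$ crossing $2^t$, and $y+z$ crossing the next period boundary). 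On the first and third intervals the slopes of $g_t(y+z)$ and $g_t(y)$ agree, so the difference is the constant $+z$ and $-z$ respectively, contributing $z(2^t - z)$ to the integral of the absolute value on each.

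On each of the two transition intervals of length $z$, the slopes disagree, so $g_t(y+z) - g_t(y)$ is an affine function of slope $\pm 2$ running from $z$ to $-z$ (or vice versa) and vanishing at the midpoint; its absolute value is a symmetric tent of height $z$ and base $z$, contributing $z^2/2$ each. Summing the four contributions gives $2z(2^t - z) + z^2 = 2^{t+1}z - z^2 = (2^{t+1} - z)z$, and dividing by the period length $2^{t+1}$ yields the claimed expectation. The only real obstacle is the bookkeeping of the four cases and verifying the endpoint values agree across pieces; once the piecewise formula is written down, everything is explicit and the integration is routine.
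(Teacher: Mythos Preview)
Your proposal is correct and follows essentially the same approach as the paper: reduce to $z\le 2^{t}$ via the symmetry $z\leftrightarrow 2^{t+1}-z$ (obtained from periodicity), then compute the integral by a direct piecewise analysis. The only cosmetic differences are that the paper performs the symmetry reduction after the main computation rather than before, and it splits each of your two ``tent'' intervals at their midpoints (so five breakpoints and six pieces rather than three breakpoints and four pieces); the arithmetic is the same either way.
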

\begin{proof}
	Set $(*)=\mathbb{E}_{\alpha\in\left[0,1\right]}\left[\left|g_{t}(z+\alpha\cdot2^{t+1})-g_{t}(\alpha\cdot2^{t+1})\right|\right]$.
	By substituting the variable of integration,
	$(*)=\frac{1}{2^{t+1}}\cdot\int_{0}^{2^{t+1}}\left|g_{t}(z+\alpha)-g_{t}(\alpha)\right|d\alpha$.	
	First assume that $z\le2^{t}$, then there are $5$ ``phase changes''
	in $\left|g_{t}(z+\alpha)-g_{t}(\alpha)\right|$ from $0$ to $2^{t+1}$,
	at
	$2^{t}-z,~2^{t}-\frac{z}{2},~2^{t},~2^{t+1}-z,~2^{t+1}-\frac{z}{2}$.
	(see \Cref{fig:PhaseChanges} for illustration).
	
	We calculate
	\begin{align*}
	2^{t+1}\cdot(*) & =\int_{0}^{2^{t}-z}zd\alpha+\int_{0}^{\frac{z}{2}}(z-2\alpha)d\alpha+\int_{0}^{\frac{z}{2}}2\alpha d\alpha+\int_{0}^{2^{t}-z}zd\alpha+\int_{0}^{\frac{z}{2}}(z-2\alpha)d\alpha+\int_{0}^{\frac{z}{2}}2\alpha d\alpha\\
	& =2\cdot\int_{0}^{2^{t}-z}zd\alpha+2\cdot\int_{0}^{\frac{z}{2}}zd\alpha=\left(2^{t+1}-z\right)z\,.
	\end{align*}
	For $z>2^{t}$, set $w=2^{t+1}-z$. Then using that $g^{t}$ is periodic,
	\begin{align*}
	 \mathbb{E}_{\alpha\in\left[0,1\right]}\left[\left|g_{t}(w+\alpha\cdot2^{t+1})-g_{t}(\alpha\cdot2^{t+1})\right|\right]
	& =\,\,\mathbb{E}_{\alpha\in\left[0,1\right]}\left[\left|g_{t}(w+z+\alpha\cdot2^{t+1})-g_{t}(z+\alpha\cdot2^{t+1})\right|\right]\\
	& =\,\,\mathbb{E}_{\alpha\in\left[0,1\right]}\left[\left|g_{t}(2^{t+1}+\alpha\cdot2^{t+1})-g_{t}(z+\alpha\cdot2^{t+1})\right|\right]\\
	& =\,\,\mathbb{E}_{\alpha\in\left[0,1\right]}\left[\left|g_{t}(z+\alpha\cdot2^{t+1})-g_{t}(\alpha\cdot2^{t+1})\right|\right]~.
	\end{align*}
	Hence by the first case, $(*)=\frac{\left(2^{t+1}-w\right)w}{2^{t+1}}=\frac{\left(2^{t+1}-z\right)z}{2^{t+1}}$.
\end{proof}

For the proof of \propertyref{Pr:SmallandLargeGap} assume w.l.o.g.\ that $x>y$. Set $z=x-y$, and  $(*)=\mathbb{E}_{\alpha,\beta}\left[\left|g_{t, \alpha, \beta}(x)-g_{t, \alpha, \beta}(y)\right|\right]$.
As $g_{t}$ is a periodic function, we have that $(*)=\mathbb{E}_{\beta}\left[\mathbb{E}_{\alpha}\left[\left|g_{t, \alpha, \beta}(z)-g_{t, \alpha, \beta}(0)\right|\right]\right]$.
The rest of the proof is by case analysis.

\begin{itemize}
	\item \textbf{If} $\left|x-y\right|\le2^{t-1}$ \textbf{:}~~~
	Using \Cref{clm:BrIdnZ}, we have
	\begin{align*}
	& (*)=\mathbb{E}_{\beta}\left[\frac{\left(2^{t+1}-\beta z\right)\beta z}{2^{t+1}}\right]=\frac{1}{2^{t+1}}\cdot\frac{1}{4}\cdot\left(\frac{2^{t+1}z}{2}\beta^{2}-\frac{z^{2}}{3}\cdot\beta^{3}\mid_{0}^{4}\right)\\
	& ~~~~~=\frac{1}{4}\cdot\left(\frac{16}{2}\cdot z-\frac{z^{2}}{2^{t+1}}\cdot\frac{2^{6}}{3}\right)\ge\frac{1}{4}\cdot\left(8-\frac{64}{3\cdot4}\right)\cdot z=\frac{2}{3}\cdot|x-y|~,
	\end{align*}
	where in the inequality we used that $z\le2^{t-1}$.
	
	\item \textbf{If} $\left|x-y\right|>2^{t-1}$ \textbf{:}~~~
	As $g_{t}$ is periodic function,  \Cref{clm:BrIdnZ} implies that for every $w\ge 0$ it holds that $\mathbb{E}_{\alpha}\left[\left|g_{t, \alpha, \beta}(w)-g_{t, \alpha, \beta}(0)\right|\right]=\frac{\left(2^{t+1}-(w\mod2^{t+1})\right)(w\mod2^{t+1})}{2^{t+1}}$.
	Let $a\in\left[0,4\right]$ such that $a\cdot z=2^{t+1}$ (such $a$ exists as $\left|x-y\right|>2^{t-1}$).
	The claim follows as,
	\begin{align*}
	(*)\cdot2^{t+1} & =\,\,\mathbb{E}_{\beta\in[0,4]}\left[\left(2^{t+1}-(\beta z\mod2^{t+1})\right)(\beta z\mod2^{t+1})\right]\\
	& \ge\,\,\sum_{i=0}^{\left\lfloor \frac{4}{a}\right\rfloor -1}\frac{1}{4}\cdot\intop_{ia}^{(i+1)a}\left(2^{t+1}-(\beta\cdot\frac{2^{t+1}}{a}\mod2^{t+1})\right)\cdot(\beta\cdot\frac{2^{t+1}}{a}\mod2^{t+1})d\beta\\
	& =\,\,\sum_{i=0}^{\left\lfloor \frac{4}{a}\right\rfloor -1}\frac{1}{4}\cdot\intop_{0}^{a}\left(2^{t+1}-\beta\cdot\frac{2^{t+1}}{a}\right)\cdot\beta\cdot\frac{2^{t+1}}{a}d\beta\\
	& =\,\,\left\lfloor \frac{4}{a}\right\rfloor \cdot\frac{1}{4}\cdot\frac{a}{2^{t+1}}\cdot\intop_{0}^{2^{t+1}}\left(2^{t+1}-\gamma\right)\cdot\gamma d\gamma\\
	& \ge\,\,\frac{4}{2a}\cdot\frac{1}{4}\cdot\frac{a}{2^{t+1}}\cdot\left(2^{t+1}\frac{\gamma^{2}}{2}-\frac{\gamma^{3}}{3}\mid_{0}^{2^{t+1}}\right)\\
	& =\,\,\frac{1}{2}\cdot\frac{1}{2^{t+1}}\cdot\frac{\left(2^{t+1}\right)^{3}}{6}=\frac{\left(2^{t+1}\right)^{2}}{12}~.
	\end{align*}
\end{itemize}

\propertyref{Pr:SmallandLargeGap} now follows.

\section{Reducing the Dimension}\label{sec:dimension}
In the previous sections we did not attempt to bound the dimension of our embedding (\Cref{thm:main}). As each point is non-zero in at most $O(k\log n)$ coordinates (taking into account the repetitions done by the Composition Lemma (\Cref{lem:ProbEmbed})), naively we can bound the number of coordinates by $O(nk\log n)$. However, we can improve further. By introducing some modifications to the embedding algorithm, we are able to bound the number of coordinates by $O(k\log n)$.
Notice that this fact is interesting only for $p>2$. For embeddings into $\ell_2$, one can easily reduce the dimension to $O(\log n)$ using the Johnson Lindenstrauss lemma \cite{JL84}. Furthermore, for embeddings into $\ell_p$ for $p\in[1,2)$, we first embed into $\ell_2$ (using dimension $O(\log n)$). Then we embed from $\ell_2$ into $\ell_p$. It is well known that $\ell_2^d$ embeds into $\ell_p^{O(d)}$ (for $p\in[1,2]$) with constant distortion (see \cite{mat13lecture}),  thus we  conclude that our embedding can use only $O(\log n)$ coordinates.

\begin{theorem}[Embeddings with bounded dimension]
	\label{thm:dimension}
	Let $G=\left(V,E\right)$ be an $n$-vertex weighted graph with
	an \SPD of depth $k$. Then there exists an embedding
	$f:V\to\ell^{O(k\log n)}_{p}$ with distortion $O(k^{\nicefrac{1}{p}})$.
\end{theorem}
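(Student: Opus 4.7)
My plan splits by the range of $p$, as hinted by the paragraph preceding the theorem.

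For $p \in [1,2]$, the bound $O(k \log n)$ is dominated by $O(\log n)$, so it suffices to give a target of dimension $O(\log n)$. I would chain three off-the-shelf reductions: first apply \Cref{thm:main} with $p = 2$ to embed $G$ into $\ell_2$ with distortion $O(\sqrt{k})$; then apply the Johnson-Lindenstrauss lemma \cite{JL84} to reduce the target to $\ell_2^{O(\log n)}$ while distorting distances by a $(1+o(1))$ factor; finally use the classical fact that $\ell_2^d$ embeds into $\ell_p^{O(d)}$ with constant distortion for $p \in [1,2]$ (see \cite{mat13lecture}). Composing these gives an embedding into $\ell_p^{O(\log n)}$ with distortion $O(\sqrt{k}) = O(k^{1/2}) \le O(k^{1/p})$, since $1/p \ge 1/2$ for $p \le 2$.

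The substantive case is $p > 2$, where general $\ell_p$ dimension reduction is unavailable. The plan is to modify the construction of \Cref{sec:embed} by pooling the many per-cell coordinates at each level via independent Rademacher $\pm 1$ signs. The enabling structural observation, already implicit in \Cref{obv:coordinates}, is that at each level $i$ a vertex $v$ has nonzero value only in the coordinates associated with the unique cell $X(v) \in \mathcal{X}_i$ containing it; moreover, within the path block $f^{\fpath}_{X(v)}$, the only nonzero entry of $v$ is in the coordinate of its unique component $X_j \subseteq X(v) \setminus P_{X(v)}$. Assigning i.i.d.\ signs $\sigma_X, \sigma_{X_j} \in \{\pm 1\}$, I replace the array of per-cell root coordinates at scale $t$ with a single pooled coordinate $\widetilde{H}^{\froot}_{i,t}(v) := \sigma_{X(v)} \cdot f^{\froot}_{X(v),t}(v)$ at each level, and analogously define a single pooled path coordinate per level. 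This shrinks the per-level count from $|\mathcal{X}_i|(M+2)$ to $M+2 = O(\log n)$, yielding $O(k \log n)$ coordinates in total.

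The analysis would then follow the template of \Cref{sec:analysis}, verifying three claims. First, the expansion bound of \Cref{lem:expansion} transfers: for $u, v$ in different cells $X \ne X'$, the pooled values are each bounded by their distance-to-boundary truncation, which is at most $2 d_G(u,v)$ since the cells are disjoint, so by the triangle inequality $|\widetilde{H}(u) - \widetilde{H}(v)| = O(d_G(u,v))$. Second, the contraction argument of \Cref{lem:contraction} applies verbatim at the critical level: both $u$ and $v$ lie in the same cell $X$ there, so the shared sign $\sigma_X$ factors out of the difference and the original lower bound is preserved. Third, at non-critical coordinates where $u, v$ may lie in different cells, the Rademacher signs give $\mathbb{E}[|\sigma_1 a - \sigma_2 b|^p] = \tfrac12(|a-b|^p + |a+b|^p) \ge \tfrac12(|a|^p + |b|^p)$, so the $\ell_p^p$ mass is preserved in expectation up to constants, and concentration plus the Composition Lemma (\Cref{lem:ProbEmbed}) closes the argument.

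The main obstacle will be step three: ensuring that the concentration from $O(\log n)$ pooled coordinates per level is tight enough to support a union bound over all $\binom{n}{2}$ pairs without forcing an additional $\log n$ factor in the dimension. The trick I would use is to reinterpret the pooled coordinates across scales within a single level as playing the role of the ``copies'' that the Composition Lemma consumes, so that the random sign structure already baked into the construction substitutes for the lemma's independent samples of $f$; this averts a naive multiplication of the dimension by $O(\log n)$ and keeps the final count at $O(k \log n)$.
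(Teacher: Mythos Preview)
Your treatment of $p \le 2$ matches the paper. For $p > 2$ there is a real gap: you assert the per-level count is $M+2 = O(\log n)$, but $M$ is $\log_2$ of the aspect ratio of $G$, which for weighted graphs is not bounded in terms of $n$. Your pooled root block thus has $k(M{+}1)$ coordinates, not $O(k\log n)$. The paper removes the $M$-dependence by a different device: since each vertex is nonzero only at the two consecutive scales $t_v$ and $t_v{+}1$, it collapses the $M{+}1$ scale coordinates into just two per level, $h^{\froot}_{i,\odd}=\sum_{t\ \mathrm{odd}}\sum_{X\in\mathcal{X}_i} f^{\froot}_{X,t}$ and $h^{\froot}_{i,\even}$ analogously (no signs are needed across clusters, since they are disjoint and each vertex lies in at most one). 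This yields exactly $2k$ root coordinates, and the $O(\log n)$ factor enters only through the repetitions of \Cref{lem:ProbEmbed}; the price is a short extra case analysis for contraction when $t_u \ne t_v$ and the two contributions land in the same parity class. Note also that even if $M$ were $O(\log n)$, your idea of treating the scale coordinates as the Composition Lemma's ``copies'' would not work: for any fixed pair $u,v$ at most four scales are nonzero, so they provide no concentration.

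There is a parallel gap in your path block. A single Rademacher-pooled coordinate per level has the right expectation but no concentration over $\binom{n}{2}$ pairs, and the path block has no scales to repurpose. The paper exploits that $f^{\fpath}$ is deterministic: instead of random signs it assigns to each component $X\in\mathcal{X}_{i+1}$ a fixed vector $\alpha^X\in\{\pm1\}^m$ with $m=O(\log n)$ and pairwise Hamming distance at least $m/4$, which deterministically preserves $\|f^{\fpath}_i(u)-f^{\fpath}_i(v)\|_p$ up to constants in $m$ coordinates per level and requires no repetition at all.
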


\begin{proof}
	Recall the embedding algorithm: we assumed that the minimal distance in $G$ is $1$, while the diameter is bounded by $2^M$. Let
	$\left\{\mathcal{X},\mathcal{P}\right\} =\left\{ \left\{ \mathcal{X}_{1},,\dots,\mathcal{X}_{k}\right\} ,\left\{ \mathcal{P}_{1},\dots,\mathcal{P}_{k}\right\} \right\} $ be an \SPD of depth
	$k$ for $G$.
	For every index $i\in[k]$ and cluster $X\in\mathcal{X}_i$ we had two different embeddings ${f}_{X}^{\fpath}$ and ${f}_{X}^{\froot}$.
	The function ${f}_{X}^{\fpath}$ is a deterministic embedding that maps each point $x\in X$ to its (truncated) distance from $P_X$, while using a different coordinate for each connected component in $X\setminus P_X$.
The function ${f}_{X}^{\froot}$ is an embedding that depends on random variables $\alpha,\beta$. It uses $M+1$ different coordinates that captures a randomly truncated distance to the root $r$ of $P_X$.
	
	In \Cref{lem:expansion} we proved that our embedding is Lipschitz in each coordinate. In \Cref{lem:contraction} we showed that for every pair of vertices $v,u$ there is some coordinate $j$ such that $\mathbb{E}\left[\left|f_j(v)-f_j(u)\right|\right]=\Omega(d_G(u,v))$. The coordinate $j$ might come from either ${f}^{\fpath}$ or ${f}^{\froot}$.
	Denote by $\mathcal{R}_\fpath\subseteq{V\choose2}$ (resp. $\mathcal{R}_\froot$) the set of pairs for which the coordinate above come from  ${f}^{\fpath}$ (resp.  ${f}^{\froot}$).
	In order to replace expectation with high probability, we invoke $O(\log n)$ independent repetitions of our embedding (\Cref{lem:ProbEmbed}).
	We will modify each type of coordinates separately, arguing that a total of $O(k\log n)$ coordinates suffices.
	
	\paragraph{${f}^{\fpath}$:} We start with modifying the ${f}^{\fpath}$ type coordinates. First, note that as the value of this coordinates chosen deterministically, there is no reason to invoke the independent repetitions (\Cref{lem:ProbEmbed}).
	Next, consider a specific level $i\in[k]$. For every cluster $X\in\mathcal{X}_{i+1}$, let $\Pi(X)\in\mathcal{X}_i$ be the cluster such that $X\subseteq \Pi(X)$.
	Denote by ${f}^{\fpath}_i$ the concatenation of all $\left({f}^{\fpath}_{\Pi(X)}\right)_X$ for $X\in\mathcal{X}_{i+1}$, and by ${f}^{\fpath}$ the concatenation of all ${f}^{\fpath}_i$ for $i\in[k]$.
	Set $D=|\mathcal{X}_{i+1}|$, note that ${f}^{\fpath}_i$ has exactly $D$ coordinates, where each $v\in V$ is non-zero in at most one coordinate.
	For every $X\in\mathcal{X}_{i+1}$ pick a sequence $\alpha^X\in \{\pm1\}^{m}$, where $m=O(\log D)$, such that for every different $X,X'\in\mathcal{X}_{i+1}$ the number of coordinates where $\alpha^X$ and $\alpha^{X'}$ differ is at least $\frac m4$.\footnote{Such a set of sequences can be chosen greedily.}
	We define a new embedding $h^{\fpath}_i:V\rightarrow\mathbb{R}^m$, such that for every $v\in X\in\mathcal{X}_{i+1}$, $h^{\fpath}_i(v)=\frac{f^{\fpath}_{\Pi(X)}(v)}{m^{\nicefrac1p}}\left(\alpha^{X}_1,\dots,\alpha^X_m\right)$. For $v\in V$ that belongs to no cluster in $\mathcal{X}_{i+1}$, set $h^{\fpath}_i(v)=\vec{0}$. Consider $v,u\in V$. If $u,v$ are both belong to the same cluster $X$, then
	\begin{align*}
	\left\Vert h_{i}^{\fpath}(v)-h_{i}^{\fpath}(u)\right\Vert _{p}^{p} & =\sum_{i=1}^{m}\left|\alpha_{i}^{X}\cdot\left(\frac{f_{\Pi(X)}^{\fpath}(v)}{m^{\nicefrac{1}{p}}}-\frac{f_{\Pi(X)}^{\fpath}(u)}{m^{\nicefrac{1}{p}}}\right)\right|^{p}\\
	& =\left|f_{\Pi(X)}^{\fpath}(v)-f_{\Pi(X)}^{\fpath}(u)\right|^{p}=\left\Vert f_{i}^{\fpath}(v)-f_{i}^{\fpath}(u)\right\Vert _{p}^{p}
	\end{align*}
	On the other hand, if $v\in X_v$ and $u\in X_u$ belong to different clusters, it holds that
	\begin{align*}
	\left\Vert h_{i}^{\fpath}(v)-h_{i}^{\fpath}(u)\right\Vert _{p}^{p} & =\sum_{i=1}^{m}\frac{1}{m}\left|\alpha_{i}^{X_{v}}\cdot f_{\Pi(X_{v})}^{\fpath}(v)-\alpha_{i}^{X_{u}}\cdot f_{\Pi(X_{u})}^{\fpath}(u)\right|^{p}\\
	\left\Vert h_{i}^{\fpath}(v)-h_{i}^{\fpath}(u)\right\Vert _{p}^{p} & \le\sum_{i=1}^{m}\frac{1}{m}\left|f_{\Pi(X_{v})}^{\fpath}(v)+f_{\Pi(X_{u})}^{\fpath}(u)\right|^{p}\\
	& \le2^{p}\cdot\left(\left(f_{\Pi(X_{v})}^{\fpath}(v)\right)^{p}+\left(f_{\Pi(X_{u})}^{\fpath}(u)\right)^{p}\right)=2^{p}\cdot\left\Vert f_{i}^{\fpath}(v)-f_{i}^{\fpath}(u)\right\Vert _{p}^{p}\\
	\left\Vert h_{i}^{\fpath}(v)-h_{i}^{\fpath}(u)\right\Vert _{p}^{p} & \ge\frac{m}{4}\cdot\frac{1}{m}\left|f_{\Pi(X_{v})}^{\fpath}(v)+f_{\Pi(X_{u})}^{\fpath}(u)\right|^{p}\\
	& \ge\frac{1}{4}\cdot\frac{1}{2}\left(\left(f_{\Pi(X_{v})}^{\fpath}(v)\right)^{p}+\left(f_{\Pi(X_{u})}^{\fpath}(u)\right)^{p}\right)=\frac{1}{8}\cdot\left\Vert f_{i}^{\fpath}(v)-f_{i}^{\fpath}(u)\right\Vert _{p}^{p}
	\end{align*}
	Note that $h_{i}^{\fpath}$ has $m=O(\log D)\le O(\log n)$ coordinates.
	Denote by ${h}^{\fpath}$ the concatenation of all ${h}^{\fpath}_i$ for $i\in[k]$. Then ${h}^{\fpath}$ has at most $O(k\log n)$ coordinates, as desired. Moreover, for all $u,v\in V$ it holds that
	\[
	8^{-\frac{1}{p}}\cdot\left\Vert f^{\fpath}(v)-f^{\fpath}(u)\right\Vert _{p}\le\left\Vert h^{\fpath}(v)-h^{\fpath}(u)\right\Vert _{p}\le2\cdot\left\Vert f^{\fpath}(v)-f^{\fpath}(u)\right\Vert _{p}~.
	\]

	\paragraph{${f}^{\froot}$:}  next we modify the $f^{\froot}$ type coordinates. Consider level $i\in[k]$, and a cluster $X\in\mathcal{X}_{i}$. $f^{\froot}_X:V\rightarrow\mathbb{R}^{M+1}$ is a function that sends each vertex $v\notin X$ to $\vec{0}$, while each vertex $v\in X$ has a specific scale $t_v\in [0,M-1]$, such that $f^{\froot}_X(v)$ can be nonzero only in coordinates $t_v,t_{v+1}$.
	Set $h^{\froot}_X:V\rightarrow\mathbb{R}^2$ as a concatenation of $h^{\froot}_{X,\odd},h^{\froot}_{X,\even}$, where $h^{\froot}_{X,\odd}$ (resp. $h^{\froot}_{X,\even}$) is the sum of all the odd (resp. even) coordinates of $f^{\froot}_X$. That is $h_{X,\odd}^{\froot}=\sum_{t=0}^{\left\lfloor \nicefrac{M-1}{2}\right\rfloor }f_{X,2t+1}^{\froot}$
	and $h_{X,\even}^{\froot}=\sum_{t=0}^{\left\lfloor \nicefrac{M}{2}\right\rfloor }f_{X,2t}^{\froot}$.
	Next define $h_{i}^{\froot}=\sum_{X\in\mathcal{X}_{i}}h_{X}^{\froot}$ as the sum of all
	$h_{X}^{\froot}$ for $X\in\mathcal{X}_{i}$.	
	Denote by $f^{\froot}_i$ the sum of all $f^{\froot}_X$ for $X\in\mathcal{X}_i$, and by ${f}^{\froot}$ the concatenation of all $f^{\froot}_i$ for $i\in[k]$.
	It is clear that the expansion is not increased in $h_{i}^{\froot}$, as for every $v,u\in V$, using the triangle inequality
	\begin{align*}
	\left\Vert h_{i}^{\froot}(v)-h_{i}^{\froot}(u)\right\Vert _{p} & \le\sum_{X\in\mathcal{X}_{i}}\left\Vert h_{i,X}^{\froot}(v)-h_{i,X}^{\froot}(u)\right\Vert _{p}\\
	& \le\sum_{X\in\mathcal{X}_{i}}\sum_{t=0}^{M}\left\Vert f_{X,t}^{\froot}(v)-f_{X,t}^{\froot}(u)\right\Vert _{p}=\left\Vert f_{i}^{\froot}(v)-f_{i}^{\froot}(u)\right\Vert _{p}~.
	\end{align*}

	Arguing that the expected contraction property is maintained is more involved.
	Consider a pair of vertices $v,u\in V$. Following the arguments in \Cref{lem:contraction}, $i$ is the minimal index such that there exists $X\in\mathcal{X}_i$ with $u,v\in X$ such that either \conref{item:pathClose} or \conref{item:separate} hold.
	We can assume that \conref{item:pathClose} holds, and moreover,
        that $d_X(v,P_X),d_X(u,P_X)\le2\Delta_{uv}/c$ (as otherwise the
        coordinate that contributes to the contraction comes from
        $f^\fpath$ and we have nothing to prove here). In particular,
        inequality \eqref{inq:xyDist}, inequality \eqref{eq:2tv} and
        \Cref{clm:ExpLowerBound} hold. Recall that we assumed  $t_v\ge
        t_u$, and let $t\in\{t_v,t_v+1\}$ such that
        $p_t\ge\frac12$. W.l.o.g., assume that $t$ is odd. We proceed to the case analysis:
	\begin{itemize}
		\item If $\left|p_{t}-q_{t}\right|\cdot2^{t}>\frac{\distuv}{S}$ and $q_t\ne 0$, note that for every odd $t'\ne t$, $f_{X,t'}^{\froot}(v)=f_{X,t'}^{\froot}(u)=0$. Therefore,  following inequality \eqref{eq:ptqtLarge} 	
		\[
		 \mathbb{E}_{\alpha,\beta}\left[\left|h_{i,\odd}^{\froot}(v)-h_{i,\odd}^{\froot}(u)\right|\right]=\mathbb{E}_{\alpha,\beta}\left[\left|f_{X,t}^{\froot}(v)-f_{X,t}^{\froot}(u)\right|\right]=\Omega(\distuv)~.
		\]
		\item Otherwise, if $q_t= 0$ there might be a single odd scale $t'\le t-2$ such that $q_{t'}\ne 0$ (if $q_{t'}=0$ for all odd scales, then the analysis above holds). We have
		\begin{align*}
		\mathbb{E}_{\alpha,\beta}\left[\left|h_{i,\odd}^{\froot}(v)-h_{i,\odd}^{\froot}(u)\right|\right] & =\mathbb{E}_{\alpha,\beta}\left[\left|f_{X,t}^{\froot}(v)-f_{X,t'}^{\froot}(u)\right|\right]\\
		& =\mathbb{E}_{\alpha,\beta}\left[\left|p_{t}\cdot g_{t, \alpha, \beta}(x)-q_{t'}\cdot g_{t', \alpha, \beta}(y)\right|\right]\\
		& \ge\left|p_{t}\cdot\mathbb{E}_{\alpha,\beta}\left[g_{t, \alpha, \beta}(x)\right]-q_{t'}\cdot\mathbb{E}_{\alpha,\beta}\left[g_{t', \alpha, \beta}(y)\right]\right|\\
		& \ge p_{t}\cdot2^{t-1}-q_{t'}\cdot2^{t'-1}\ge\frac{1}{2}\cdot2^{t-1}-2^{t-3}=2^{t-3}=\Omega(\distuv)~,
		\end{align*}
		where the last equality follows by inequality \eqref{eq:2tv}.

		\item Otherwise, $\left|p_{t}-q_{t}\right|\cdot2^{t}\le\frac{\distuv}{S}$. Using inequality \eqref{eq:2tv}, $q_t\ge\frac14$ (and therefore $f_{X,t'}^{\froot}(v)=f_{X,t'}^{\froot}(u)=0$ for every odd $t'\ne t$). Following inequality \eqref{eq:ptqtSmall},
		\[
		 \mathbb{E}_{\alpha,\beta}\left[\left|h_{i,\odd}^{\froot}(v)-h_{i,\odd}^{\froot}(u)\right|\right]=\mathbb{E}_{\alpha,\beta}\left[\left|f_{X,t}^{\froot}(v)-f_{X,t}^{\froot}(u)\right|\right]=\Omega(\distuv)~.
		\]
	\end{itemize}
Define ${h}^{\froot}$ the concatenation of all $h^{\froot}_i$ for $i\in[k]$. ${h}^{\froot}$ has exactly $2k$ coordinates. We saw that $h$ is Lipschitz in every coordinate. Moreover, for every $\{u,v\}\in\mathcal{R}_\froot$, $\mathbb{E}_{\alpha,\beta}\left[\left|h_{i,\odd}^{\froot}(v)-h_{i,\odd}^{\froot}(u)\right|\right]=\Omega(\distuv)$.

Set $h$ to be the concatenation of ${h}^{\fpath}$ and ${h}^{\froot}$. We now invoke the composition lemma (\Cref{lem:ProbEmbed}) to construct an embedding with distortion $O(k^{\frac1p})$.
Recall that during the construction of \Cref{lem:ProbEmbed} we sample  and concatenate $O(\log n)$ independent copies of $h$ (normalized accordingly). As ${h}^{\fpath}$ is deterministic, it is enough to take only a single (non-normalized) copy of ${h}^{\fpath}$, and $O(\log n)$ (normalized) copies of  ${h}^{\froot}$. In particular, the total number of coordinates is $O(k\log n)+O(\log n)\cdot 2k=O(k\log n)$, as required.
\end{proof}

\section{Conclusions}

In this paper we introduced the notion of shortest path decompositions with low depth. We showed how these can be used to give embeddings into
$\ell_p$ spaces. Our techniques give optimal embeddings of bounded
pathwidth graphs into $\ell_2$, and also new embeddings for graphs with
bounded treewidth, planar, and excluded-minor families of
graphs. Our embedding for the family of graphs with SPD depth $k$ into
$\ell_p$ has an asymptotically matching lower
bound for every fixed $p>1$. 
Our techniques already have been useful for other embedding results, e.g., for embedding planar graphs with small face covers into $\ell_1$~\cite{Fil20faces}.
We hope that our techniques will find further applications.

Our work raises several open questions.
While our embeddings are tight for fixed $p >1$, can we improve
the bounds for $\ell_1$ embedding of graphs with bounded pathwidth, or more generally to graphs with bounded \SPDdepth?
Can we give better results for the \SPDdepth of $H$-minor-free graphs? Our
approach gives a $O(\sqrt{\log n})$-distortion embedding of planar
graphs into $\ell_1$, which is quite different from the previous known
results using padded decompositions: can a combination of these ideas be used to make
progress towards the planar graph embedding conjecture?

\subsection*{Acknowledgments}
Anupam Gupta is supported in part by NSF awards CCF-1536002, CCF-1540541, and CCF-1617790.
Ofer Neiman is supported in part by ISF grant 1817/17, and by BSF Grant 2015813.

A previous version of this paper contained a lower bound for embeddings of graphs with bounded \SPDdepth into $\ell_1$, based on the diamondfold graph \cite{LS11}. Our proof was wrong, and hence this statement is removed from the current version.

{\small
  \bibliographystyle{alphaurlinit}
  \bibliography{bib-extended,art}
}

\newpage	
\end{document}